\def \fsize{10cm}
\def \fext{pdf}
\def \vect#1{\mbox{\boldmath $#1$}}
\newcommand{\rif}{{\rm ~~if~}}
\newcommand{\mc}{\mathcal}
\newtheorem{definition}{Definition}
\newtheorem{lemma}{Lemma}
\newtheorem{theorem}{Theorem}
\title{Pure Nash Equilibrium and Coordination of Players in Ride Sharing Games}
\author{%
  Tatsuya Iwase \\
  %\thanks{Financial support from the Jan Wallander and Tom Hedelius foundation is gratefully acknowledged}\\
  \small Social Systems Research-Domain\\
  \small Toyota Central R\&D Labs., Inc. \\
  \small\texttt{tiwase@mosk.tytlabs.co.jp}
  \and
  Takahiro Shiga\\
  \small Social Systems Research-Domain\\
  \small Toyota Central R\&D Labs., Inc. \\
  \small\texttt{t-shiga@mosk.tytlabs.co.jp}
}
\date{}
\begin{document}
  \maketitle

\begin{abstract}
In this study, we formulate positive and negative externalities caused by changes in the supply of shared vehicles as ride sharing games. The study aims to understand the price of anarchy (PoA) and its improvement via a coordination technique in ride sharing games. A critical question is whether ride sharing games exhibit a pure Nash equilibrium (pNE) since the PoA bound assumes it. Our result shows a sufficient condition for a ride sharing game to have a finite improvement property and a pNE similar to potential games. This is the first step to analyze PoA bound and its improvement by coordination in ride sharing games. We also show an example of coordinating players in ride sharing games using signaling and evaluate the improvement in the PoA.
    %\blindtext
\end{abstract}
\newpage

\begin{hdn}
	\color{red}
	\begin{verbatim}
シナリオ：
従来CG上でmediateしていた。mediateによる改善幅MRを知る問題。
CGはpNE,FIPという良い性質があり、それを使うとPoA, MRをboundできた。
今回、シェア問題のために、CGを拡張しRSGを作った。PGであればmediate改善幅を知れる。
果たしてRSGはFIPなのか？=gQ1、PoA,MRのboundは幾つになるのか？=gQ2？？
A:あるケースも有る。成り立つ具体的なクラスを示した.oPGになることがわかった=T1
gQ2に対しては今後の課題。ここでは事例だけ示す
今後PoA,MR
	\end{verbatim}
	\color{black}
%gQ1に対してだけ書く。PoAはCGで分析されてる。pNEがあるという性質を利用している。
%RSにpNEはあるのか？
\end{hdn}

\begin{hdn}
	\color{red}
	\begin{verbatim}
	証明グラフで論旨を確認:
	 thm:driverでH1,H3,H4つかってない
	\end{verbatim}
	\color{black}	
	%協力ゲー、cost sharingとの関係は？includes cost sharing method,..とか

	%pNEないとダメってのは言い過ぎでは…ラフ論文みてもうちょい正確に.PoA boundがpNEに依存？
	%不正確：共通部分パスは1本ではない。driverは一回はupdateできる
	%abst=contrib=summaryになってる。summaryいらん？
	%日付を消す
	%H番号を統一する
	%なんか図の位置が変
	%交互のbest response updateの定義とみんなそうする仮定
	%確率分布の集合？ルベーグ測度…
	%複雑すぎて一般性がない、と言われそう。シェアゲーもっと一般化できないか？とくにTとか、移動とかの概念を無くして選択肢集合表現に.状態は車初期位置で固定なの？Xと一般化してしまえばいい.このpaperではinit locationを表す、といえばいい
	%naturalじゃないだろ車キープできないなら…selfishではない…
	%BNE定義
	%profileはベクトルにする
	%rの演算＋、in,capを正確に。最後までなのか、一本なのか。なにかそういう従来概念ないのか？
	%ridesegも一本なのか？longestなのか？最後までなのか？
	%rcnondecも正確にrcは一本なのか？
	%driver nondecの正確な証明
    %max shareの定義は複数台あった時に正確ではない
	%なんかまだズレてる。efficiency coordinationではない。pNEが研究目的。
	%そもそも問題の定義もきちんと書いてない。PoAとMRの定義してない。協調の定義もしてない＝PoA減らすこと…？
	%FIPでないとpoa,mrの分析できない
	%posi nega externalityを表したのがCG。ODを合わせる、というのがRSGの本質？？
	%IC定義
	%査読用日本語コメント
	%CGの課題
	%PoAはpNE
	%abst, contrib, summary更新→tradeoff無くす
	%メカデザとpricingの関係(zhangのreview、もしくはメカデザという語のあるpricingの文献、もしくは交通とゲームの乖離文献)
	%外生変数x0をxに統一
	%Ecsの結果
	%CGの応用例もう一つ
	%ケチ：なんかもっと限定的でわかりやすい事例でもいいのかも。PGのケースが有る、くらいの意味しかない。PoAは発生しないUE=SOだし、EcにするとPGにはならない。PGがあるケースを示すのがfirst stepだ、くらいの主張しかできない。
    %$\mu$の条件まだ.$\mu$を$\mu1(m_{o},w,n_{od})$と$\mu1(m_{e},w,n_{od})$に分解する必要あり
    %non-decreasingの条件は実際にはx0にも依存する
	%なんかpgにしたことでpoa分析しやすくなったと言う	
	%業報のは全部入ってる？
	%1ノードペアに1エッジしか無いグラフ→とくに限定しないで証明可能？
	%英語：定義より
	%contribution
\end{hdn}

%\tableofcontents\newpage
%  \blinddocument
\section{Introduction}
\label{sec:intro}

\begin{hdn}
	\color{red}
	\begin{verbatim}
   eG:シェアのためOD協調させたい
  　eA:車初期値をxにしたベイジアンゲーにして、BCEでBNEつくる
  　Q:sharingを効率化するにはどうしたらよいか？
	\end{verbatim}
	\color{black}
	%偏在・クリマスはジレンマ。駐車場は？他の人が駐めないなら使う。ジレンマか！全て後述するジレンマの構造をしている
	%本質的な問題は？CGとの本質的な差分は？なぜstateが必要？なぜ何がCGではできない？
	%都市化が進むとcongestionの問題が大きくなる。
	%カーシェア以外にリソースのムダを無くす問題。稼働率最大化問題：空いている部屋・土地・人事・レストラン・施
	%都市化によるクルマ離れ。
	%車の利便性が低いせいだよ。電車へ移行
	%車の利便性を上げる必要がある。車のメリットは自由度
	%デメリットは駐車場、コスト、渋滞(キャパシティ)
  　%混雑・動いていない車・空いている座席・車の稼働率を最大にする。協調させる
   %シェアを成立させるような需要の協調問題
   %カーシェア・ライドシェアはある(ref今井他)けど、うまく機能してない。なぜなら駐車場・偏在・クリティカルマス(ref)
	%両外部性を車シェアゲーとしてモデル化。そのゲームでplayerをcoordinateさせる方法を考えた。
\end{hdn}

\begin{jap}
	\color{red}
	\begin{verbatim}
	今後都市化が進むと渋滞やリソース配分が問題になる。使われていないリソースの有効活用は解決策の一つであるが、車の供給変化による外部性は従来のcongestion gameではモデル化できない。本研究では新しくライドシェアリングゲームを定式化し、プレイヤを協調させた時のPoAの改善を検討する。
	\end{verbatim}
	\color{black}
\end{jap}

% アプリG
Congestion and effective resource allocation are traditional problems that game theory has been trying to solve. As populations are increasingly concentrated in big cities, the congestion problem becomes more critical. One practical way to solve the congestion problem is sharing of unused resources. For example, even during heavy traffic congestion, many vehicles have empty seats. Similarly, many empty vehicles occupy limited parking spaces in urban areas while people struggle to find an empty taxi. Besides the transportation field, we can find examples such as unused buildings, empty restaurants, and idle workers.

% アプリQ
However, the sharing of unused resources is not realized unless there are matching demands of multiple users. Empty vehicle seats are shared among passengers only if they have a common part on their routes. A vehicle is shared only if the destination of a driver is equal to the origin of another driver. If people behave selfishly, the probability of matching the demand for sharing becomes small, leading to the problem of demand coordination and incentive design.

% アプリQのゲー理論Q翻訳
Particular to the sharing of vehicles is that positive and negative externalities of one player's route change are propagated to other players who do not choose the relevant routes via changes in vehicle supply. Consider a chain of vehicle sharing in which player 1 first drives a vehicle from A to B, then player 2 drives the same vehicle from B to C, and player 3 drives the vehicle from C to D. If player 1 quits to use the vehicle, player 2 and even 3 cannot use the vehicle even though player 3 does not share any part of his route with player 1. Because classical congestion games only focus on externalities among players who choose the same routes, it is necessary to consider alternative games that model externalities via vehicle supply to analyze coordination in vehicle sharing.

%it has a trade-off between positive externalities by sharing and negative externalities by congestions. A proper number of people together makes sharing work well while too many people make congestions. Accordingly, there is an optimal point where sharing becomes most effective. The classical congestion games only focus on the negative externalities and then can not manage the trade-off. It is necessary to give people incentive to behave in coordinated manner to maximize the positive externalities and minimize the negative externalities.

% 本論文のゲー理論的QA
This study formulates ride sharing games that model the positive and negative externalities of choosing vehicles. We also consider how to coordinate players to improve the efficiency of sharing, i.e., maximizing the operation rate of otherwise unused vehicles by giving players an incentive. We mainly assume applications in transportation areas, such as carpooling and ride sharing.

%The purpose of this paper is to formulate sharing games which has the trade-off between positive and negative externalities and to improve the efficiency of sharing by giving players incentive to coordinate each other on the games. We mainly assume applications in transportation areas, such as car-pooling and ride-sharing.

\subsection{Related work}
\label{sec:related}

\begin{hdn}
	\color{red}
	\begin{verbatim}
	\end{verbatim}
	\color{black}
%FIPでないとmediatorの有効性評価できない？poaやmr
%CGとの違い：人数以外の外部性？でもポテゲーにならないとpoa分析しにくい
%pNEの存在に立脚している
%pricingとmechanism design
%good model of traffic
%potentialゲーとしてwell studied. 良い性質FIP,pNE
%externality of selfish choice poaについても
%negative sideについてはCG。
%share in transportは従来のrouting problemとは違う。車の需要。自分の車ではない。availability
%一方according to agatz, positive sideについてはあまり。agatz review
%こういう問題については
%coordination, incentiveはまだ
%coordinationについてはメカデザ
%オークションなど。交通ではpricing。CGベースのrouting
%garingのcost sharing
%一方シグナリングが新しい
%交通ではwuやwisdom。やっぱりcongestion gameベース
%　gP:CG。sandolm, ordinalPGの応用例がない
%　CGはバラつかせるモデル
%　集中させるモデルはあまり
%　　BCE, vassermanは並列BCG。
%　　agatz。ゲー理論的研究がない！
%	CGとはPGとは
%	ordinal PGはpNEがあってFIP。
\end{hdn}

\begin{jap}
	\color{red}
	\begin{verbatim}
	従来混雑問題はcongestion gameでモデル化され、PoAによる混雑外部性の分析やメカニズムデザイン・シグナリングによる混雑解消案が提案されている。しかしcongestion gameでは、プレイヤA->B->Cと車が乗り継がれた時の、A->Cへの外部性の伝播を表現できない。カーシェアリングのPoAを研究した事例はないとするレビュー報告もある。
	\end{verbatim}
	\color{black}
\end{jap}

% CG
Since Rosenthal introduced the congestion game\cite{rosenthal}, it has been applied to problems of congestion externalities in several areas such as transportation and communication networks\cite{cgapp}. In this game, players choose a combination of resources and the players' payoffs depend on the congestion level, i.e., the number of players using the same resources. The congestion game is a subclass of potential games, which feature the {\it finite improvement property} (FIP). This property guarantees that if each player updates his strategy in response to other players by turns, it will reduce his private cost and improve the common potential function, which eventually reaches a local minimum, the {\it pure Nash equilibrium} (pNE), where each player has a deterministic strategy. Because of this property, potential and congestion games have been well studied and have wide applications\cite{sandolm}. 
%. At each step, one player improves the potential function until it finally reaches a local minimum. At that point, the game reaches 

% poa
The negative externalities in congestion games are also well studied. If players choose their routes selfishly in road networks, it causes loss of social welfare compared to socially optimal routing. The ratio of social cost in a selfish choice to the one in the social optimal is called the {\it price of anarchy} (PoA) and its bounds are well known, especially for affine cost functions\cite{poa}. Studies on PoA bound assume that a game has a pNE. Because of the property of always having a pNE, most PoA studies have focused on congestion games.

%However despite of their good well known properties, congestion games do not focus on the trade-off between negative congestion externalities and positive sharing externalities and then are not suitable for analysis of optimal coordination in sharing problems.

% shaingはcgとはsupply of vehicleの点で異なるよ
A major difference between vehicle sharing and traffic routing problems is vehicle supply. While players drive their own vehicles in traditional traffic routing problems, players must find shared vehicles before riding in sharing problems. In the case of a chain of vehicle sharing from player A via player B to player C, the choice of player A has externalities not only on player B but those also propagate to player C, who does not share any common part of route with player A. This kind of complicated externality is not considered in traditional routing problems and congestion games.

%If one player rides a shared vehicle and moves it from point A to B, it reduces supply of vehicle at A but increases supply at B. This causes complicated externalities which is not considered in traditional routing problems and congestion games. If we consider some way to improve efficiency of vehicle sharing, it is necessary to coordinate players' moves to match their demands with supply of vehicles.

% agatz。move changeはない
According to a review of sharing studies in transportation areas\cite{agatz}, there are few game theory studies on externalities and coordination of players' moves in vehicle sharing problems. Traditional studies on sharing include the optimization of vehicle routes for picking up all passengers (the {\it Dial-a-ride problem}), problems of splitting passengers' fares according to their riding distances, optimization of locations of carpool stations, and problems of relocation of carpool vehicles among stations. Those studies mainly focus on optimization problems under given moves (origins and destinations) of passengers and fixed drivers of vehicles, and therefore, do not analyze the PoA when passengers strategically choose their moves in response to the moves of other passengers and vehicles. The review states that there are no studies on coordination of players' moves to improve the efficiency of sharing or reduce the PoA.

% coordination
There are several studies that examine the coordination of players in traditional traffic routing and congestion games. One promising technique for coordination is the mechanism design, which mainly provides monetary incentives for players to change their behavior in a coordinated manner. Christodoulou studied coordination mechanism in congestion games\cite{cgmech}.

% signaling
Another relatively new technique for coordination is signaling, in which a mediator provides information for players to control their beliefs on uncertain environments, and accordingly, the expected payoff and resulting choices when there is information asymmetry between the mediator and players\cite{kame,bce,kremer}. Most recent studies are based on the {\it revelation principle}, which proved the existence of incentive compatible recommendations of choices equivalent to the raw information inducing the same choices\cite{rev}. Rogers applied differential privacy techniques originally from database security to traffic routing problems for mitigating congestions by sending noisy incentive compatible recommendations of routes\cite{rogers}. Vasserman also applied recommendations to traffic routing and analyzed how the PoA improved\cite{vasserman}.

\subsection{Our contributions and paper structure}
\label{sec:contribution}

\begin{hdn}
	\color{red}
	\begin{verbatim}
	G:シェアゲーのcoordinate効果を評価したい。CGと同じくPoAで。
	Q:シェアゲーはFIPか？pNEあるのか？
	A:CGにないこういう要素をformulateした。pNEある。T1,2,3を英語で
	S:section。coordinateは事例で
	
	シェアでゲー理論でPoAっていう研究はまだない
	これまでとは違うポテンシャル関数
	\end{verbatim}
	\color{black}
	%ゲームQとそれについてわかったこと。証明を日本語で。定理番号付きで。(richard GCGの真似)
	% 従来のCGではこういうことできないから拡張したSRG作った
	% その時のQは…であるが、今回こういうAが分かった
	%ほとんどconclusionの代わり
	%jordiの書き方が良い
	%何を証明した？ゲー理論的差分はひとことで言うと？それはどこまで一般的？アプリでなくゲー論的うれしさは？
	%　：シェアゲーの特殊な場合にPGになることを発見した？
	%formulate, coordination by signal, especially focusing on positive external
	%一日の移動
	%シェアゲーのゲー理論的研究
	%ビリデザ事例を示した？
	%構成
\end{hdn}

\begin{jap}
	\color{red}
	\begin{verbatim}
	車のシェアリングにおける外部性をライドシェアリングゲームとしてモデル化し、シグナリングによる協調効果をPoAによって評価することを目的とする。PoAは純粋ナッシュ均衡を前提としているので、本研究ではライドシェアリングゲームが純粋ナッシュ均衡を持つ十分条件を示した。またシグナリングによるPoA改善効果を事例評価した。
	\end{verbatim}
	\color{black}
\end{jap}

This study proposes ride sharing games as a formulation of positive and negative externalities caused by changes in the supply of shared vehicles. The study's objective is to understand the PoA and its improvement via a coordination technique in ride sharing games. A critical question is whether ride sharing games have a pNE since the PoA bound assumes it. Our result shows a sufficient condition for a ride sharing game to have a FIP and a pNE similar to potential games. This is the first step to analyze PoA bound and its improvement by coordination in ride sharing games. We also show an example of coordinating players in ride sharing games using signaling and evaluate the improvement in the PoA.

%We formulate positive and negative externalities caused by changes in supply of shared vehicles as ride sharing games. The purpose of this study is understanding price of anarchy and its improvement via a coordination technique on the ride sharing games. One of a critical question for that purpose is whether the ride sharing games have a pure Nash equilibrium since the PoA assumes pNE. Our result shows a sufficient condition that a ride sharing game to have finite improvement property and a pNE like potential games. This is a first step to analyze PoA and its improvement by coordination on the ride sharing games. We also show an example of coordinating players on ride sharing games by using signaling and evaluate the improvement of PoA.

%We formulate trade-off between positive and negative externalities in sharing of vehicles as ride sharing games. The 
Section \ref{sec:model} provides a formulation of ride sharing games and other setups. Section \ref{sec:theorem} presents the main results on a condition of ride sharing games to have pNE and its proof. Section \ref{sec:exam} presents graphic examples of ride sharing games and coordination of players by signaling.

\begin{comment}
We examine a self-fulfilling signal in a network congestion game to mitigate inefficiency
caused by incomplete information. Our contribution is threefold. First, we formulate
the problem of designing a self-fulfilling signal for an endogenous game variable.
This problem formulation is general and includes many applications that have
endogenous variables with incomplete information. Second, we show an instance of
an asymmetric Gaussian signal that has self-fulfilling capability in multicommodity,
atomic, and unweighted network congestion games with affine cost functions and Gaussian
priors. This result provides a new alternative of endogenous information to the
signaling scheme. Third, we employ an example of minority game to verify the performance
of a self-fulfilling signal in mitigating inefficiency caused by incomplete information.
This result indicates the possibilities of coordination by signaling with endogenous
information.
Section 2 presents our model. Section 3 illustrates our main results regarding selffulfilling
signals in network congestion games. Section 4 describes numerical experiments
and their results.
\end{comment}

\section{The Models}
\label{sec:model}

\begin{jap}
	\color{red}
	\begin{verbatim}
以下の各種数学的定義。
・ライドシェアリングゲーム
・ポテンシャルゲーム関連の用語
・ベイズライドシェアリングゲーム
・シグナリング関連の用語
	\end{verbatim}
	\color{black}
\end{jap}

\subsection{Ride sharing games}
\label{sec:rsg}

\begin{hdn}
	\color{red}
	\begin{verbatim}
	\end{verbatim}
	\color{black}
　　%全概念の定義
	%協調ライドシェアリング？？→サービスは興味ない
	%淡々とdefinitionを書く。業報丸写し
	%gS:こういう内生CRG作った
	%Tのための仮定。x=f(a)
	%ワンショットゲー
	%cはwで折り返す
\end{hdn}

A {\it ride sharing game} is defined as a tuple $G=<\mc{N},\mc{M},\mc{T},\mc{G},\mc{A},\mu,c>$, where 
\begin{itemize}
	\item $\mc{N}=\{1,\ldots,N\}$ is a finite set of players. A player $i \in \mc{N}$ represents an user of shared vehicles. $-i$ represents all players except for $i$.
	\item $\mc{M}=\{1,\ldots,M\}$ is a finite set of vehicles. Each vehicle $m \in \mc{M}$ has a common seating capacity $w \in \mathbb{N}_{> 0}$.
	\item $\mc{G}=<\mc{V},\mc{E}>$ is a directed graph that has a finite set of nodes $\mc{V}=\{1,\ldots,V\}$ and a finite set of edges $\mc{E}=\{1,\ldots,E\}$. $\mc{G}$ is a simple graph but each node has a loop to itself. A node $v \in \mc{V}$ represents a place and an edge $e \in \mc{E}$ represents a road. Players and vehicles move on $\mc{G}$.
	\item $\mc{T}=\{1,\ldots,T\}$ is a finite set of time that partitions the day. Each player and vehicle is located on a node at time $t \in \mc{T}$ and finishes a move on an edge during period $(t,t+1)$.
	\item $\mc{R}$ is a set of all paths with length $T-1$ on $\mc{G}$. A path $r=(v_{1},e_{1},v_{2},e_{2},\ldots,e_{T-1},v_{T}) \in \mc{R}$ represents a round trip of a player on a day. We denote $r_{k} \subseteq r$ if $r_{k}$ is a induced path of $r$. $r-r_{k}$ is a complement of $r_{k}$ which is also a induced path of $r$ including all edges not in $r_{k}$. $r$ is a common path $r=r_{1} \bigcap r_{2}$ if and only if $r \subseteq r_{1}$ and $r \subseteq r_{2}$.
	
	%We denote $r=r_{1}+r_{2}$, $r \supset r_{1}$ and $r \supset r_{2}$ if $r_{1}=(v_{1},e_{1},\ldots,v_{k})$, $r_{2}=(v_{k},\ldots,e_{T-1},v_{T})$ and $r$ consists of only $r_{1}$ and $r_{2}$. %$r_{1} \bigcap r_{2}$ represents a longest common parts of $r_{1}$ and $r_{2}$.
	\item $\mc{A}_{i} \subset \mc{R}$ is a set of strategies of player $i$. $\mc{A}=\underset{i \in \mc{N}}{\times}\mc{A}_{i}$ is a set of strategy profiles. $a_{i} \in \mc{A}_{i}$ is a round trip of player $i$ and $\vect{a} \in \mc{A}$ is a strategy profile. $\vect{a}_{-i}$ represents a strategy profile of all players except for $i$. A strategy update is denoted as $(\vect{a},b_{i},i)$ when a original strategy profile is $\vect{a}$ and player $i$ update a strategy from $a_{i}$ to $b_{i}$.
	%\item $x_{m,t} \in \mc{V}$ represents a node where vehicle $m$ is located at time $t$. $x_{t}=\{x_{1,t},\ldots,x_{M,t}\}$ is initial locations of all vehicles at time $t$.
	\item $\mu(i,t,\vect{a}):<\mc{N},\mc{T},\mc{A}> \to \mc{M}$ is a map that represents the allocation of player $i$ to vehicle $m$ during each period $(t,t+1)$ depending on strategy profile $\vect{a}$. In the case where no vehicle is allocated to player $i$, $\mu(i,t,\vect{a})=\emptyset$. Each vehicle $m$ moves together with allocated player $i$ on the same edge where the player moves. $s_{m}(t,\vect{a})$ represents the number of players riding on vehicle $m$ during period $(t,t+1)$ when the strategy profile is $\vect{a}$.
	\item $c_{e}(w,s_{m}):<\mathbb{N}_{> 0},\mathbb{N}_{\ge 0}> \to \mathbb{R}_{\ge 0}$ is a cost function of a player riding on vehicle $m$ on edge $e$. $c=\{c_{e}|e \in \mc{E}\}$ is a set of cost functions of all edges. The total cost of player $i$ on a day is $c_{i}(\vect{a})=\sum_{e_{t} \in a_{i}}c_{e}(w,s_{\mu(i,t,\vect{a})}(t,\vect{a}))$.
\end{itemize}

In this study, we consider one-shot games, where players simultaneously choose whole round trips $\vect{a}$ on the day. We assume that cost function $c_{e}$ is monotone decreasing for $s_{m}$ when $s_{m} < w$ and monotone increasing when $s_{m} \ge w$.

\subsection{Basic notions of potential games}
\label{sec:fip}

\begin{hdn}
	\color{red}
	\begin{verbatim}
	\end{verbatim}
	\color{black}
%\color{red} bound分析はCGに関してのみ行われている。なぜならば \color{black}
	%pf
	%update cycle
	%FIP
	%pNE
	%cpg,opg
	%theorem:pgはFIP,pNE
\end{hdn}

Here we prepare the basic concepts of potential games used in the rest of the paper. A {\it pure Nash equilibrium} (pNE) of game $G=<\mc{N},\mc{A},c>$ is defined as a deterministic strategy profile $\vect{a} \in \mc{A}$ if and only if no player can reduce his cost by updating his deterministic strategy from $\vect{a}$.

While not all games have a pNE, some games always have a pNE. In particular, the following property guarantees the existence of a pNE.

\begin{definition}[Finite improvement property]
	A game has a finite improvement property (FIP) if a strategy profile of the game always converges to a pure Nash equilibrium by updating each player's strategy, by turns, finite times.
	\label{def:fip}
\end{definition}

When a game has an FIP, a pNE can be found by players updating their strategies one player at a step. Therefore, it is unnecessary to search the whole set of strategy profiles and the computational cost to find a pNE is reduced.

A potential game is game $G$ which has a potential function $\Phi$ defined as follows\cite{opg}.

\begin{definition}[(Ordinal) potential game]
	$\Phi:\mc{A} \to \mathbb{R}$ is a {\it (ordinal) potential function} and $G$ is a {\it (ordinal) potential game} if
	\begin{equation}
	sgn(\Phi(b_{i},\vect{a}_{-i})-\Phi(a_{i},\vect{a}_{-i}))=sgn(c_{i}(b_{i},\vect{a}_{-i})-c_{i}(a_{i},\vect{a}_{-i})).
	\label{eq:pg}
	\end{equation}
	\label{def:pg}
\end{definition}

The following theorem guarantees that potential games have an FIP and then a pNE.

\begin{theorem}
	If a potential game has a potential function with a finite amount of values, the game has an FIP.
	\label{thm:pg}
\end{theorem}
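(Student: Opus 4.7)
The plan is to establish FIP by showing that every improvement sequence in $G$ must terminate, and that its terminal profile is automatically a pNE. The argument chains three observations: strict cost improvements become strict potential decreases (via the ordinal potential property), a strictly decreasing sequence with finitely many possible values must terminate, and any non-extendable improvement sequence ends at a pNE.

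First I would fix an arbitrary improvement path $\vect{a}^{(0)}, \vect{a}^{(1)}, \vect{a}^{(2)}, \ldots$, where at each step $k \ge 0$ a single player $i_{k}$ switches from $a^{(k)}_{i_{k}}$ to $a^{(k+1)}_{i_{k}}$ while the strategies of the other players are held fixed, and the switch strictly lowers $i_{k}$'s cost:
\[
c_{i_{k}}\bigl(a^{(k+1)}_{i_{k}}, \vect{a}^{(k)}_{-i_{k}}\bigr) \;<\; c_{i_{k}}\bigl(a^{(k)}_{i_{k}}, \vect{a}^{(k)}_{-i_{k}}\bigr).
\]
Applying Definition \ref{def:pg}, specifically equation (\ref{eq:pg}) with $b_{i}=a^{(k+1)}_{i_{k}}$ and $a_{i}=a^{(k)}_{i_{k}}$, the sign of the cost difference equals the sign of the potential difference, so $\Phi(\vect{a}^{(k+1)}) < \Phi(\vect{a}^{(k)})$. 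Iterating, $\{\Phi(\vect{a}^{(k)})\}_{k \ge 0}$ is a strictly decreasing sequence taking values in the image of $\Phi$.

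Since by hypothesis $\Phi$ attains only finitely many values, no strictly decreasing sequence in its image can be infinite; hence the improvement path must halt after finitely many steps at some profile $\vect{a}^{*}$. By the definition of an improvement path, termination means that no player at $\vect{a}^{*}$ has a unilateral deviation that strictly reduces his cost, so $\vect{a}^{*}$ is a pNE. Because the starting profile and the order of deviating players were arbitrary, $G$ has the FIP in the sense of Definition \ref{def:fip}.

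There is no substantive obstacle here; the result is essentially a bookkeeping argument reducing FIP to the well-foundedness of a strictly decreasing sequence in a finite set. The only point meriting care is that equation (\ref{eq:pg}) is a sign equality rather than a one-directional implication, which ensures that a \emph{strict} cost decrease for the unique deviating player translates into a \emph{strict} (not merely weak) decrease of $\Phi$ at every step; without this the sequence could stall on a constant value and the termination argument would fail.
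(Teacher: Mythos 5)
Your proof is correct and follows essentially the same route as the paper's: translate each strict cost improvement into a strict decrease of $\Phi$ via the sign equality in Definition \ref{def:pg}, invoke finiteness of the potential's range to force termination, and observe that the terminal profile is a pNE. Your write-up is more careful than the paper's (which restricts attention to best-response updates and elides the finiteness step), but the underlying argument is identical.
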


\begin{proof}
	Starting from an arbitrary strategy profile $\vect{a}$, each player updates their strategy by turns to minimize his cost. From the definition, it also reduces the value of the potential function until it reaches a local minimum. At that point, no player can reduce his cost and it is a pNE. From Definition \ref{def:fip}, it also has an FIP.
\end{proof}

% PoAの定義
In applications such as transportation and communication networks, problems are often formulated as a minimization of a social cost, which is the total cost of all players. Normally, a social cost in a pNE is not the same as in the optimal. The ratio between the cost of a pNE and an optimal cost is called the {\it price of anarchy} (PoA) and is defined as follows\cite{poa}.

\begin{definition}[Price of anarchy]
	The price of anarchy $\rho(G)$ of game $G$ is
	\begin{equation}
	\rho(G)=\frac{C(\vect{a}^*)}{C(\vect{a}^{opt})}
	\label{eq:poa}
	\end{equation}
	where $C(\vect{a})$ is a social cost of strategy profile $\vect{a}$, $\vect{a}^*$ is pNE and $\vect{a}^{opt}$ is the social optimal.
	\label{def:poa}
\end{definition}

Since the PoA assume a pNE, former studies on the PoA have focused on games with a pNE, such as congestion games.

\subsection{Bayesian ride sharing games}
\label{sec:brsg}

\begin{hdn}
	\color{red}
	\begin{verbatim}
	\end{verbatim}
	\color{black}
	%BNE
	%全概念の定義
	%rsgを拡張した
\end{hdn}

We consider cases where players have incomplete information on vehicle allocations. A {\it Bayesian ride sharing game} is an extension of a ride sharing game and defined as $G_{b}=<\mc{N},\mc{M},\mc{T},\mc{G},\mc{A},\mc{X},\mc{P},\mu,c>$ where
\begin{itemize}
	\item $\mc{X}$ is a set of possible values of an exogenous variable $x \in \mc{X}$, which affects the allocation of vehicles $\mu$.
	\item $\mu(i,t,\vect{a}|x)$ is the allocation of vehicles depending on $x$. Similarly, $s_{m}(t,\vect{a}|x)$ is the number of players on vehicle $m$ depending on $x$ and $c_{i}(\vect{a}|x)$ is the cost of player $i$ depending on $x$. 
	\item $p_{i}(x):\mc{X} \to [0,1]$ is a probability distribution on $X$ of player $i$, which represents his belief. $\mc{P}=\{p_{i}|i \in N\}$ is a set of probability distributions of all players.
	\item Definitions of other elements of $G_{b}$ are the same as those of ride sharing game $G$.
\end{itemize}\

Examples of exogenous variable $x$ are initial vehicle locations and traffic accidents. Each player chooses $a_{i}$ to minimize his expected cost, which is $\mathbb{E}_{\mc{X}}[c_{i}]=\sum_{x \in \mc{X}}c_{i}(\vect{a}|x)p_{i}(x)$.

A pure Bayesian Nash equilibruim (pBNE) of a bayesian game is a similar concept to pNE of a deterministic game that is defined as a deterministic strategy profile $\vect{a} \in \mc{A}$ if and only if no player can reduce his expected cost by updating his deterministic strategy from $\vect{a}$. 

\subsection{Signaling on Bayesian ride sharing games}
\label{sec:signal}

\begin{hdn}
	\color{red}
	\begin{verbatim}
	\end{verbatim}
	\color{black}
	%基本BCE使う
	%poa
	%MRはこう
\end{hdn}

In this study, we use the {\it Bayes correlated equilibrium} (BCE)\cite{bce} as a signaling technique of a mediator to coordinate players. A BCE is a conditional distribution $\sigma(\hat{\vect{a}}|x)$ of a random recommendation $\hat{\vect{a}}$, which is {\it incentive compatible} (IC) as defined below.

\begin{definition}[Incentive compatible]
	A recommendation policy $\sigma(\hat{\vect{a}}|x)$ is incentive compatible if
	\begin{equation}
	\sum_{\hat{\vect{a}}_{-i},x}p_{i}(x)\sigma(\hat{\vect{a}}|x)c_{i}(\hat{a}_{i},\hat{\vect{a}}_{-i}|x) \leq \sum_{\hat{\vect{a}}_{-i},x}p_{i}(x)\sigma(\hat{\vect{a}}|x)c_{i}(a_{i},\hat{\vect{a}}_{-i}|x),\: \forall i \forall \hat{a}_{i}.
	\label{eq:ic}
	\end{equation}
	\label{def:ic}
\end{definition}

Given the cost function of the mediator $c_{s}(\vect{a}|x)$, the problem of the mediator is to design an optimal IC recommendation that makes players coordinate to minimize their cost. The problem is expressed as follows.

\begin{equation}
\left.
\begin{array}{l}
\max_{\sigma} \mathbb{E}_{x}[c_{s}(\hat{\vect{a}}|x)] \\
s.t. \: \sum_{\hat{\vect{a}}_{-i},x}p_{i}(x)\sigma(\hat{\vect{a}}|x)c_{i}(\hat{a}_{i},\hat{\vect{a}}_{-i}|x) \leq \sum_{\hat{\vect{a}}_{-i},x}p_{i}(x)\sigma(\hat{\vect{a}}|x)c_{i}(a_{i},\hat{\vect{a}}_{-i}|x),\: \forall i \forall \hat{a}_{i}.
\end{array}
\right.
\label{eq:problem}
\end{equation}

\section{Result}
\label{sec:theorem}

\begin{hdn}
	\color{red}
	\begin{verbatim}
	PoAに関するT
	\end{verbatim}
	\color{black}
%ride segment不正確
%Qをハッキリ
%max seat複数台の時は？
%gQ(良い性質):車競争ない時OPGと等価である
%gA:ポテンシャルがこれ
%gT:
%T0:通常のCRGはPGではない。A0:example1を見よ
%T1:こういう条件でOPGである。FIPでありpNEある
%  wlogでdc<0のAiに限定してもよい？
%CRGは外生xOPGと等価になる。pBNEが存在する
%仮定：
%・割り当てはノードローカルで決まる
%・割り当てはエッジの人数に比例して割り当てられる
%・定員はないので一台で十分
%・需要エッジ数<台数でありノードローカルに奪い合いは発生しない
%・wはグループ人数より多い

%MRが定義できる。vassermanは並列CGにしか対応してない。今回はexampleに留める
%グループ間でm独立の条件
%markov allocation?t若い順に独立に決めてく
\end{hdn}

\begin{jap}
	\color{red}
	\begin{verbatim}
ライドシェアリングゲームが純粋ナッシュ均衡を持つ十分条件とその証明
eq.15のようなポテンシャル関数がある
	\end{verbatim}
	\color{black}
\end{jap}

% ゲームQの確認
Here, we discuss when ride sharing games have a pNE in order to evaluate the PoA of a game and its improvement by coordination.

% 結果1:T1
We first start with the following negative result in most general cases.

\begin{theorem}
	There exist ride sharing games that do not have an FIP.
	\label{thm:nonpg}
\end{theorem}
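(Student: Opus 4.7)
The plan is to prove this negative existence statement by exhibiting a single concrete ride sharing game $G$ together with an improvement cycle, i.e., a finite sequence of unilateral strategy updates $(\vect{a}^{(0)},b_{i_1},i_1),(\vect{a}^{(1)},b_{i_2},i_2),\ldots,(\vect{a}^{(k-1)},b_{i_k},i_k)$ in which each updating player strictly reduces his own cost yet $\vect{a}^{(k)}=\vect{a}^{(0)}$. By Definition \ref{def:fip}, the existence of such a cycle immediately rules out FIP, since an adversarial scheduler can simply iterate the cycle forever without ever reaching a pNE.

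For the construction I would take the smallest nontrivial parameters that permit chain-style vehicle sharing: roughly $N=3$ players, $M=2$ vehicles with small capacity $w$, a graph $\mc{G}$ with about four nodes arranged to allow both ``overlapping'' and ``sequential'' route sharing, and $T=3$ or $T=4$ so that one player can drop a vehicle at a node where another picks it up. The strategy sets $\mc{A}_i$ would be restricted to two round trips per player so the enumeration is short. The key lever is the cost function $c_e(w,s_m)$: by the assumption in Section \ref{sec:rsg}, $c_e$ decreases while $s_m<w$ (the positive externality of finding a ride) and increases once $s_m\ge w$ (the standard congestion externality). Choosing a piecewise-constant $c_e$ whose drop at $s_m=1$ and jump at $s_m=w+1$ are both large lets us simultaneously create (i) strong pull to join a shared vehicle and (ii) strong pull to escape an overcrowded one. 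The allocation map $\mu$ would be specified so that player 1's choice shifts a vehicle's whereabouts at time $t$, which flips whether player 2 can ride it downstream, which in turn flips player 3's best response—exactly the chain propagation highlighted in Section \ref{sec:intro} as the distinguishing feature of ride sharing games relative to standard congestion games.

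With these ingredients I would display a four-step cycle: player 1 switches to exploit a sharing opportunity, pushing player 2 off her vehicle; player 2 then switches to a different route for which she picks up a different shared vehicle, whose capacity is now exceeded, so player 3 switches off; player 3's switch frees space that makes player 1's original route attractive again, closing the loop. All four deviations would be verified directly from the definition of $c_i(\vect{a})=\sum_{e_t\in a_i}c_e(w,s_{\mu(i,t,\vect{a})}(t,\vect{a}))$ using a small table of values.

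The main obstacle is bookkeeping, not conceptual: each unilateral deviation must be strictly improving for the deviator and the induced changes to $\mu$ and to $s_m$ for players other than the deviator must be consistent across the cycle. Because a change in one player's route can reshuffle vehicle assignments for players whose edges are entirely disjoint from the deviator's—a phenomenon absent in ordinary congestion games—a naive construction will typically fail to close. I expect to resolve this by first fixing the cycle of allocation patterns I want to realize, then reverse-engineering edge costs and an allocation rule $\mu$ that induce exactly that cycle, rather than picking costs first and hoping a cycle emerges. Once a valid example is on the page, Theorem \ref{thm:nonpg} follows.
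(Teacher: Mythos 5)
Your approach is exactly the paper's: Theorem \ref{thm:nonpg} is proved by pointing to a concrete counterexample (Section \ref{sec:nonpg}) containing an improvement cycle, and the parameters you guess ($N=3$, $M=2$, small $w$, four nodes, two-ish routes per player) are essentially the ones the paper uses ($N=3$, $M=2$, $w=1$, $V=4$, complete graph with loops, first-fit linear allocation, the cost function of Eq.~\ref{eq:cost}). Two remarks. First, the engine of the paper's cycle is slightly different from the one you emphasize: with $w=1$ no overcrowding jump is needed; the cycle is driven entirely by the \emph{linear path allocation} (Definition \ref{def:linpath}), which reassigns a vehicle away from player $b$ when player $a$ quits driving and the demand count $N_{et}$ shifts, so $b$ loses his ride through no route overlap with $a$ and must chase the other vehicle, ad infinitum. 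You do not need to reverse-engineer an exotic $\mu$ or a piecewise cost with a large jump at $s_m=w+1$; the paper's fixed allocation rule and the simple decreasing-then-increasing cost of Eq.~\ref{eq:cost} already do the work. Second, and more importantly, your text is a plan for a counterexample rather than a counterexample: for an existence statement proved by exhibition, the verified instance \emph{is} the proof, and you explicitly defer the bookkeeping ("once a valid example is on the page"). To close the gap you must commit to a specific game, list the strategy profiles $\vect{a}^{(0)},\ldots,\vect{a}^{(k)}=\vect{a}^{(0)}$, and check from $c_i(\vect{a})=\sum_{e_t\in a_i}c_e(w,s_{\mu(i,t,\vect{a})}(t,\vect{a}))$ that each deviator strictly improves --- which is precisely what the figures and cost table of Section \ref{sec:nonpg} supply.
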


\begin{proof}
	The example in Section \ref{sec:nonpg} shows the case in which the strategy updates of players are caught in an infinite loop, which will not converge to any pNE.
	%\qed
\end{proof}

If all ride sharing games fall into this case, it is hard to apply theory of PoA. However, we found cases where ride sharing games have an FIP. Intuitively, cost functions $c_{e}$ become monotone decreasing when $N \le w$ and then ride sharing games have a structure of {\it increasing returns} and are {\it locked in} to a pNE\cite{arthur}. Before proceeding, we introduce several notions here.

Let $M_{et}(\vect{a})$ and $N_{et}(\vect{a})$ be the number of vehicles and players on edge $e$ during period $(t,t+1)$, respectively. The change in $N_{et}$ by a strategy update $(\vect{a},b_{i},i)$ is defined as follows.
 
\begin{equation}
	\Delta N_{et}(\vect{a},b_{i},i)=N_{et}(b_{i},\vect{a}_{-i})-N_{et}(\vect{a}).
	\label{eq:neinc}
\end{equation}

\begin{definition}[No-vehicle-loss update]
	A strategy update $(\vect{a},b_{i},i)$ is no-vehicle-loss if $M_{et}(b_{i},\vect{a}_{-i}) \not= 0, \forall \{e_{t}: e_{t} \in b_{i}, M_{et}(\vect{a}) > 0\}$.
	\label{def:nondecu}
\end{definition}
 
An allocation map $\mu$ of a ride sharing game can be divided into {\it path allocation} $\mu_{r}$ and {\it seat allocation} $\mu_{s}$. Path allocation determines the paths of vehicles. Once an edge on which a vehicle moves has been fixed, seat allocation determines the allocation of players to vehicles on the edge. The following path allocation assumes that the more demands there are on an edge, the more vehicles are allocated to it.

\begin{definition}[Linear path allocation]
	A linear path allocation $\mu_{r}$ determines an allocation of $M_{et}$ out of $M$ vehicles on a node $v$ to outgoing edge $e \in \mc{E}_{v}$ on which $N_{et}$ players move at $t$ so that
	\begin{equation}
	M_{et} = floor(k*N_{et})
	\label{eq:linpath}
	\end{equation}
	where $k$ is a constant that keeps $M=\sum_{e \in \mc{E}_{v}}k*N_{et}$. The remaining $M-\sum_{e_{t} \in \mc{E}_{v}}M_{et}$ vehicles are allocated to $\mc{E}_{v}$ in the order of $k*N{et}-M_{et}$.
	\label{def:linpath}
\end{definition}

\begin{definition}[Allocated path]
	$r$ is an allocated path if at least one vehicle is allocated on all edges in $r$.
	\label{def:allocp}
\end{definition}

Seat allocation is a simple version of bin packing problem. If players are willing to share a vehicle to reduce their costs, it is natural to assume the first-fit algorithm\cite{binpack} as follows. 

\begin{comment}
\begin{definition}[Non-decreasing inter-edge allocation]
	An inter-edge allocation is non-decreasing if $M_{et}(a)$ is non-decreasing for $N_{et}(a)$ for all $e,t$.
	\label{def:nondec}
\end{definition}
\color{black}
\end{comment}

\begin{definition}[First-fit seat allocation]
	A seat allocation is first-fit if players are allocated to a vehicle with the smallest $m$ on the edge until it becomes full.
	\label{def:firstfit}
\end{definition}

This definition immediately yields the following lemma.

\begin{lemma}
	If $N \leq w$ and $\mu_{s}$ is the first-fit seat allocation, all $N_{et}$ players ride in the same vehicle if $M_{et} > 0$.
	\label{thm:acar}
\end{lemma}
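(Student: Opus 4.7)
The plan is to combine the three hypotheses directly: the capacity bound $N \le w$, the existence hypothesis $M_{et} > 0$, and the tie-breaking rule built into the first-fit seat allocation of Definition \ref{def:firstfit}. The claim then reduces almost to a counting observation, so I would keep the argument short and mostly unpack definitions.

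First I would fix an edge $e$ and period $(t,t+1)$ with $M_{et}(\vect{a}) > 0$, and let $m^{*}$ be the vehicle with the smallest index among those allocated to $e$ in period $(t,t+1)$. Because $m^{*}$ is the smallest-indexed vehicle on the edge, the first-fit rule of Definition \ref{def:firstfit} dictates that every player on edge $e$ during that period is assigned to $m^{*}$ before any other vehicle on $e$ receives a player, and this continues until $m^{*}$ is full, i.e., until $w$ players have been packed into it.

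Next I would observe that $m^{*}$ can never become full in this process. Indeed, the total number of players on edge $e$ during period $(t,t+1)$ satisfies $N_{et}(\vect{a}) \le N \le w$ by the global bound $N \le w$ in the hypothesis, so the first-fit packing into $m^{*}$ terminates before exhausting its capacity. Consequently all $N_{et}(\vect{a})$ players end up in the single vehicle $m^{*}$, i.e., $\mu(i,t,\vect{a}) = m^{*}$ for every player $i$ traversing $e$ in that period, which is the statement of the lemma.

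There is no real obstacle here: the argument is essentially the definition of first-fit combined with the inequality $N_{et} \le N \le w$. The only point to state carefully is why the smallest-indexed vehicle on the edge absorbs all demand before any other vehicle is tapped, which is exactly what Definition \ref{def:firstfit} prescribes. I would therefore expect the written proof to be one short paragraph, serving as a lemma that will be invoked repeatedly when reasoning about shared-vehicle configurations in the subsequent FIP argument.
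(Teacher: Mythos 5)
Your argument is correct and matches the paper's intent exactly: the paper offers no explicit proof, stating only that the lemma follows immediately from Definition \ref{def:firstfit}, and your unpacking (all players go to the smallest-indexed vehicle on the edge, which never fills because $N_{et} \le N \le w$) is precisely that immediate argument.
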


\begin{definition}[First fit linear allocation]
	An allocation $\mu$ is first-fit liner if it comprises a linear path allocation and a first-fit seat allocation.
	\label{def:maxnondec}
\end{definition}

The following lemma states that copying the strategy of another player $j$ always results in a cost less than $c_{j}$ in some ride sharing games. 

\begin{lemma}
	$c_{i}(a_{j}, \vect{a}_{-i}) \le c_{j}(\vect{a})$ if
	\begin{itemize}
		\item[H1:] all players have a common set of actions $\mc{A}_{0}$,
		\item[H2:] $N \leq w$,
		\item[H3:] $\mu_{s}$ is the first-fit seat allocation, and
		\item[H4:] strategy update $(\vect{a},a_{j},i)$ is no-vehicle-loss.
	\end{itemize}
	\label{thm:rspg}
\end{lemma}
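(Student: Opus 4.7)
The plan is to compare $c_{i}(a_{j},\vect{a}_{-i})$ and $c_{j}(\vect{a})$ edge-by-edge. Hypothesis H1 lets $i$ adopt $a_{j}$ as a valid strategy, so both costs are sums over the same edges $e_{t}\in a_{j}$, and it suffices to prove a per-edge inequality. Two preliminary observations drive the argument: (i) the occupancy is non-decreasing under the update, i.e., $N_{et}(\vect{a})\le N_{et}(a_{j},\vect{a}_{-i})$ for all $e_{t}\in a_{j}$, because $j$ is already on $e_{t}$ and $i$ either joins it or was already on it via $a_{i}$; (ii) by H2 we have $N_{et}\le N\le w$ throughout, so we stay in the monotone decreasing regime of $c_{e}(w,\cdot)$.

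Next I would fix an edge $e_{t}\in a_{j}$ and split into cases on $M_{et}(\vect{a})$. When $M_{et}(\vect{a})>0$, Lemma~\ref{thm:acar} (invoked with H2 and H3) places all $N_{et}(\vect{a})$ players on $e_{t}$ in the same vehicle as $j$, so $s_{\mu(j,t,\vect{a})}(t,\vect{a})=N_{et}(\vect{a})$. Hypothesis H4 is precisely what guarantees $M_{et}(a_{j},\vect{a}_{-i})>0$, so Lemma~\ref{thm:acar} applies again after the update and pools everyone with $i$, giving $s_{\mu(i,t,(a_{j},\vect{a}_{-i}))}(t,(a_{j},\vect{a}_{-i}))=N_{et}(a_{j},\vect{a}_{-i})$. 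Combining with observation (i) and the monotone-decreasing assumption on $c_{e}$ yields $c_{e}(w,N_{et}(a_{j},\vect{a}_{-i}))\le c_{e}(w,N_{et}(\vect{a}))$. When $M_{et}(\vect{a})=0$, the per-edge comparison either stays unchanged or improves, because either no vehicle appears on $e_{t}$ after the update (same no-vehicle cost) or one does (strictly smaller by monotonicity). Summing over $e_{t}\in a_{j}$ gives the claim.

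The hardest part will be cleanly handling the $M_{et}(\vect{a})=0$ case, since the model does not explicitly specify the cost when $\mu(\cdot)=\emptyset$; a mild convention (e.g.\ reading the cost on a vehicleless edge as $c_{e}(w,0)$) is needed to invoke monotonicity there. Everything else is bookkeeping, with the roles of the hypotheses being transparent: H1 permits the copy, H2 keeps occupancy in the decreasing regime of $c_{e}$, H3 is exactly what feeds Lemma~\ref{thm:acar}, and H4 is precisely the condition that preserves single-vehicle pooling after $i$'s deviation.
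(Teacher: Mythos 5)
Your proposal follows essentially the same route as the paper's proof: use H1 to legitimize the copy, invoke Lemma~\ref{thm:acar} via H2 and H3 to identify the occupancy $s$ with $N_{et}$, observe that $N_{et}$ cannot decrease on edges of $a_{j}$, use H4 to keep a vehicle present after the update, and conclude edge-by-edge from the monotone-decreasing cost. Your explicit case split on $M_{et}(\vect{a})$ and the remark about the convention $s_{et}=0$ when no vehicle is present just make explicit what the paper's Eq.~\ref{eq:se} encodes implicitly; no substantive difference.
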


\begin{proof}
	Let $\vect{a}$ be a current strategy profile. Now consider strategy update $(\vect{a},a_{j},i)$ which is always possible because of H1. Since all $N_{et}$ players ride on a vehicle on the edge according to H2, H3 and Lemma\ref{thm:acar}, the number of players sharing a car $s_{m}$ now depends only on $M_{et}$ and $N_{et}$ as follows.
	\begin{equation}
	s_{m}(t,\vect{a})=s_{et}(\vect{a})=
	\left\{
	\begin{array}{ll}
	N_{et}(\vect{a}) & \rif M_{et}(\vect{a}) \ge 1\\
	0 & \rif M_{et}(\vect{a})=0.
	\end{array}
	\right.
	\label{eq:se}
	\end{equation}
	Since player $i$ joins in $a_{j}$ and all other players' strategies remain the same,
	
	\begin{equation}
	\Delta N_{et}(\vect{a},a_{j},i) \ge 0, \forall e_{t} \in a_{j}.
	\label{eq:neinc}
	\end{equation}
	Therefore, from H4 and Eq.\ref{eq:se}, we get 
	\begin{equation}
	\Delta s_{et}=s_{et}(a_{j},\vect{a}_{-i})-s_{et}(\vect{a}) \ge 0, \forall e_{t} \in a_{j}.
	\label{eq:seinc}
	\end{equation}
	From H2, cost functions $c_{e}$ are monotone decreasing such that
	\begin{equation}
	c_{e}(w,s_{et}(a_{j},\vect{a}_{-i})) \le c_{e}(w,s_{et}(\vect{a})), \forall e_{t} \in a_{j}.
	\label{eq:cedec}
	\end{equation}
	Then,
	\begin{equation}
	c_{i}(a_{j},\vect{a}_{-i}) = \sum_{e_{t} \in a_{j}}c_{e}(w,s_{et}(a_{j},\vect{a}_{-i})) \le \sum_{e_{t} \in a_{j}}c_{e}(w,s_{et}(\vect{a})) = c_{j}(\vect{a}).
	\label{eq:cij}
	\end{equation}
	This completes the proof of Lemma \ref{thm:rspg}.
\end{proof}

\begin{definition}[Riding path]
	$r$ is a riding path of player $i$ if $r \subseteq a_{i}$ and $r$ is an allocated path.
	\label{def:ridingp}
\end{definition}

\begin{definition}[Necessary path]
	$r \in \mc{R}_{n}$ is a necessary path if
	\begin{equation}
	c_{i}(a_{i} \supseteq r, \vect{a}_{-i}) \le c_{i}(b_{i} \not\supseteq r, \vect{a}_{-i}), \forall i, \vect{a}_{-i}
	\label{eq:necp}
	\end{equation}
	\label{def:necp}
	when $r$ is an allocated path.
\end{definition}

\begin{definition}[Sufficient path]
	$r \in \mc{R}_{s}$ is a sufficient path if $r_{c}$ is a riding path of player $i$ and
	\begin{equation}
	c_{i}(r_{c} = r, \vect{a}_{-i}) \le c_{i}(r_{c} \not = r, \vect{a}_{-i}), \forall i, \vect{a}_{-i}.
	\label{eq:sufp}
	\end{equation}
	\label{def:sufp}
\end{definition}

\begin{definition}[Disjoint path set]
	A set of paths $\mc{R}$ is disjoint if
	\begin{equation}
	r_{i} \bigcap r_{j} = \emptyset, \forall \{i \not= j:r_{i},r_{j} \in \mc{R} \}.
	\label{eq:disjp}
	\end{equation}
	\label{def:disjp}
\end{definition}

\begin{definition}[Driver and passener]
	Let $r$ be a necessary path and $r_{c} \supseteq r$ is a riding path of player $i$. $i$ is a driver if $r_{c} \supset r$. $i$ is a passenger if $r_{c} = r$.
	\label{def:driver}
\end{definition}

\begin{lemma}
	Once a player update his strategy and becomes a driver, he cannot improve his cost by updating his strategy again if
	\begin{itemize}
		\item[H1:] all players have a common set of actions $\mc{A}_{0}$,
		\item[H2:] $M=1$,
		\item[H3:] $N \leq w$,
		\item[H4:] $\mu_{s}$ is the first-fit seat linear allocation, and
		\item[H5:] $G$ has a disjoint set of necessary and sufficient paths $\mc{R}_{ns}$.
	\end{itemize}
	\label{thm:driver}
\end{lemma}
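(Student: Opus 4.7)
The plan is to leverage H2 ($M=1$) together with H5 (disjointness of $\mc{R}_{ns}$); the author's internal note already flags that H1, H3, H4 are not used in this lemma. Write the driver's current riding path as $r_c = r + r_p$, with $r \in \mc{R}_{ns}$ necessary-and-sufficient and $r_p$ nonempty per Definition \ref{def:driver}. Because $M=1$ and $i$ \emph{just} became the driver by the preceding update, the unique vehicle is sustained along $r_c$ by $i$'s driving: otherwise $i$ could have been a passenger on $r$ already, and the previous update would not have created $i$ as a driver. Now consider an arbitrary further deviation $b_i$ and split on $i$'s new riding path $\tilde r_c$ under $(b_i,\vect{a}_{-i})$.

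\textbf{Case analysis.} First, if $\tilde r_c$ contains no path in $\mc{R}_{ns}$, then Definition \ref{def:necp} applied to $r$ gives directly $c_i(b_i,\vect{a}_{-i}) \ge c_i(a_i,\vect{a}_{-i})$, since any riding path failing to contain a necessary path is at least as costly as one that does. Second, if $\tilde r_c = r'$ for some $r' \in \mc{R}_{ns}$ (i.e.\ $i$ attempts to revert to passenger status), then by H2 some player must still be driving the single vehicle along $r'$; but the other players' strategies are unchanged, and prior to the update that made $i$ the driver nobody was driving along $r$, while by H5 no other player's path extends through $r'$ from a disjoint $r'' \in \mc{R}_{ns}$. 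Hence $r'$ is not an allocated path under $(b_i,\vect{a}_{-i})$, contradicting $\tilde r_c = r'$ as a riding path. Third, if $\tilde r_c \supsetneq r'$ for some $r' \in \mc{R}_{ns}$, so $i$ remains a driver but possibly on a different necessary path, Definition \ref{def:sufp} says the minimum over riding paths containing $r'$ is attained at $r_c = r'$; but the previous case shows this minimum is unattainable, so any feasible driver path has cost bounded below by $c_i(a_i,\vect{a}_{-i})$, and disjointness (H5) rules out any cross-path savings from switching the served necessary path from $r$ to $r'$.

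\textbf{Main obstacle.} The delicate step is the second case: rigorously establishing that once $i$ abandons the driver role, no admissible reallocation of the single vehicle can still cover any necessary path in $\mc{R}_{ns}$. This is precisely where H2 and H5 do the essential work, since $M=1$ pins down a unique vehicle whose trajectory is determined by the collective demand, and disjointness of $\mc{R}_{ns}$ forbids any chaining of necessary paths through a shared endpoint that could restore coverage of $r'$ through another player's extension. Once this feasibility claim is secured, the remaining cases reduce to direct applications of Definitions \ref{def:necp} and \ref{def:sufp}.
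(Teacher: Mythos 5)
Your case analysis does not close, and the central gap is in your third case. From ``the minimum over riding paths containing $r'$ is attained at $\tilde r_c = r'$, but that minimum is unattainable'' you conclude that ``any feasible driver path has cost bounded below by $c_i(a_i,\vect{a}_{-i})$.'' That is a non sequitur: knowing the infimum of a family of deviations is not achieved tells you nothing about whether the achievable deviations in that family lie above or below the reference value $c_i(a_i,\vect{a}_{-i})$. You never actually compare the cost of driving through a different $r'$ against the current cost, which is the whole content of the lemma for that case. There is also a mismatch in your first case: Definition \ref{def:necp} is stated in terms of whether the \emph{strategy} $b_i$ contains $r$, while you split on whether the \emph{riding path} $\tilde r_c$ contains a path of $\mc{R}_{ns}$; a strategy can contain $r$ yet have a riding path that does not (the vehicle simply never reaches $r$), so your Case~1 does not cover all deviations with $b_i\not\supseteq r$, and your Cases~2--3 do not cover the rest. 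Finally, dismissing H3 and H4 on the strength of a draft annotation is not a proof step, and in fact the argument needs them: monotone decreasing costs (which requires $N\le w$, H3) and the first-fit allocation (H4) are what make the shared segment cheap.

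The paper's proof takes a different and more direct route that you should adopt. Write the driver's riding path as $r_c\supset r$ with $r\in\mc{R}_{ns}$. By H2, H5 and Definitions \ref{def:sufp} and \ref{def:driver}, every other player updates to be a passenger of $r$, so the number of players on $r$ only increases after $i$'s last update (and, by disjointness, decreases on every other member of $\mc{R}_{ns}$); with monotone decreasing costs the cost of the segment $r$ is therefore already at its minimum possible value. Meanwhile $i$ is the \emph{only} player on $r_c-r$, so the cost of $a_i-r$ is independent of $\vect{a}_{-i}$ and was minimized when $i$ last updated. The two segments together show the current cost of $a_i$ is minimal over all deviations, which is exactly the statement. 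Your proposal never establishes either half of this decomposition, and without it the lemma does not follow from the definitions alone.
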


%The proof of Lemma \ref{thm:driver} is in Appendix \ref{sec:proofdriver}.

\begin{proof}
	Let $i$ be a driver who has a riding path $r_{c}$ including a necessary and sufficient path $r$. From H2, H5 and Definitions \ref{def:sufp} and \ref{def:driver}, all other players update their strategies to be passengers of $r$. Then the number of players on $r$ increases compaired to the one when $i$ lastly updated his strategy. Since $\mc{R}_{ns}$ is disjoint, the number of players on the other $r_{ns} \in \mc{R}_{ns}$ decreases. Then the cost for the part $r$ stays minimal since cost functions are monotone decreasing.
	Meanwhile, $i$ is a only player who is on $r_{c}-r$ since all other players are passengers of $r$. Then the cost for the part $a_{i}-r$ is independent of $\vect{a}_{-i}$ and stay minimal since when $i$ updated strategy last time.
	Therefore, the cost of whole path $a_{i}$ is minimal and then player $i$ cannot update his strategy.
\end{proof}

%A no-vehicle-loss inter-edge allocation is a strong assumption so that the games that Lemma \ref{thm:rspg} covers are very limited such as the case that strategies of all players are exactly the same in pNE. 
The following theorem tells us there is a class of ride sharing games that has an FIP.

\begin{theorem}
	A ride sharing game $G$ has an FIP if
	\begin{itemize}
		\item[H1:] all players have a common set of actions $\mc{A}_{0}$,
		\item[H2:] $M=1$,
		\item[H3:] $N \leq w$,
		\item[H4:] $\mu_{s}$ is the first-fit seat linear allocation, and
		\item[H5:] $G$ has a disjoint set of necessary and sufficient paths $\mc{R}_{ns}$.
	\end{itemize}
	\label{thm:rsone}
\end{theorem}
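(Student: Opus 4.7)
The plan is to prove FIP directly by showing that any sequence of unilateral improving updates from an arbitrary initial strategy profile terminates in finitely many steps. The key structural observation I would exploit is that by H2 ($M=1$) combined with the disjointness of $\mc{R}_{ns}$ from H5, at any moment at most one element $r \in \mc{R}_{ns}$ can be an allocated path, and this $r$ is precisely the one lying in the current driver's riding path (if a driver exists). This lets me decompose the update sequence into phases indexed by the currently active necessary path $r$, separated by updates in which a new driver emerges on a different element of $\mc{R}_{ns}$.

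I would first bound the total number of phase transitions by $N$. A phase transition occurs exactly when some player updates his strategy and becomes a driver on some $r \in \mc{R}_{ns}$; by Lemma \ref{thm:driver}, that player has no profitable deviation thereafter, so each player can trigger at most one transition and there are at most $N$ such events in total.

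Within a single phase, the current driver's strategy is frozen by Lemma \ref{thm:driver}, so the remaining updates are performed only by non-drivers. Any such update that leaves the updater as a non-driver confines his new riding path to edges already covered by the driver, so the update is no-vehicle-loss (H4 ensures the single allocated vehicle remains on those edges since the player count on them can only grow). Lemma \ref{thm:rspg} then yields that copying any segment of the driver's strategy gives a cost no greater than the driver's, while the sufficient path property in H5 guarantees that the optimal non-driver strategy takes $r$ exactly as the riding path. Combined with the finiteness of $\mc{A}_{0}$ from H1 and the strict improvement required at each update, only finitely many updates per player are possible, so each phase ends after finitely many updates.

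The hard part, as I see it, is the careful verification that non-driver updates within a phase really are no-vehicle-loss: I need the linear path allocation in H4 to ensure that as more players migrate onto the driver's edges, the single vehicle stays routed along the driver's path rather than being reassigned to some other outgoing edge by the allocation rule, which is precisely what permits Lemma \ref{thm:rspg} to be applied with $j$ set to the current driver. Once that verification is in hand, combining the bound of $N$ phase transitions with the per-phase termination yields FIP.
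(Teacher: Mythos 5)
Your overall architecture differs from the paper's: you argue termination directly via a phase decomposition (at most $N$ driver-creation events by Lemma \ref{thm:driver}, plus finitely many non-driver updates per phase), whereas the paper exhibits an explicit ordinal potential $\Phi(\vect{a})=\min_{k}c_{k}(\vect{a})$, shows each improving update weakly decreases it using the same two lemmas you invoke (Lemma \ref{thm:rspg} for non-drivers copying the current minimum-cost player, Lemma \ref{thm:driver} to dispose of drivers), and then appeals to Theorem \ref{thm:pg}. The two routes are close in spirit, but the potential-function route packages the monotonicity you need into a single global quantity.

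That packaging is exactly what your argument is missing, and the gap is in the within-phase termination step. You conclude that ``the finiteness of $\mc{A}_{0}$ \ldots and the strict improvement required at each update'' imply only finitely many updates per player. That inference is invalid in general: a player's cost depends on $\vect{a}_{-i}$, so after he improves, subsequent moves by other non-drivers can raise his cost again and re-enable a profitable deviation, producing an infinite improvement cycle over a finite strategy space. The paper's own non-FIP example in Section \ref{sec:nonpg} is precisely such a cycle, so you cannot get termination from finiteness plus strict improvement alone --- you need a quantity that is monotone along the update sequence within a phase. The natural candidate is the minimum cost $\min_{k}c_{k}(\vect{a})$ (non-increasing because, by Lemma \ref{thm:rspg}, any non-driver's best response is at least as good as copying the current minimum-cost player, and the congestion on the shared path $r$ only improves as passengers accumulate), which is what the paper uses; alternatively you could argue that within a phase every non-driver's improving update necessarily lands on the essentially unique passenger strategy (riding path exactly $r$ by the sufficient-path property, plus a fixed-cost completion off the vehicle), after which his cost can only fall, so he never moves again. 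Either repair works, but as written the step does not. Your identification of the ``hard part'' (that migrating passengers do not cause the linear path allocation to reroute the single vehicle off the driver's path, so H4 keeps the updates no-vehicle-loss) is apt --- the paper asserts this rather than proving it in detail --- but it is not the step that actually breaks.
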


\begin{proof}
	Let $\Phi$ be a function defined as
	\begin{equation}
	\Phi(\vect{a})=\min_{k} c_{k}(\vect{a}).
	\label{eq:rspf}
	\end{equation}
	Let $\vect{a}$ be the current strategy profile, let $j$ be a player who has the current minimum cost among all players, and let his current profile be $a_{j}$. Then, $\Phi$ becomes
	\begin{equation}
	\Phi(\vect{a})=c_{j}(\vect{a}).
	\label{eq:cj}
	\end{equation}

	If $j$ has no riding path, no player are allocated a vehicle because of Definition \ref{def:necp}. In this case, player $i$ can update his strategy to $b_{i}$, which has a necessary allocated path. From Definition \ref{def:necp}, $b_{i}$ updates the minimum cost and accordingly $\Phi(b_{i},\vect{a}_{-i}) \le \Phi(\vect{a})$.
	
	%If $j$ is not either a driver or a passenger, no player has a last ride segment because of H5. From H2, player $i$ can update his strategy to $b_{i}$ which includes a last ride segment. Now he is a lonely driver and the last ride segment is $r_{c}=r_{p}+r_{*}$ from H5. From definition \ref{def:lowlastride}, $c_{i}(b_{i},a_{-i}) \le c_{j}(a)=\Phi(a)$ since $a_{j} \not\supseteq r_{*}$. Then $b_{i}$ updates the minimum cost and accordingly $\Phi(b_{i},a_{-i}) \le \Phi(a)$.		
		
	If $j$ has a riding path $r_{j}$, it must include a necessary path $r$ because of Definition \ref{def:necp}. If $i$ is not a driver, he can copy the strategy of $j$ as $b_{i}=a_{j}$ without reducing $N_{et}$ for all $e_{t} \in r_{j}$ and from H4 the strategy update $(\vect{a},a_{j},i)$ is no-vehicle-loss. Then from Lemma \ref{thm:rspg}, it updates the minimum cost and accordingly $\Phi(b_{i},\vect{a}_{-i}) \le \Phi(\vect{a})$.
		
	%If $j$ is a driver, $i$ can copy the strategy and update to $b_{i}=a_{j}$. From H5, the last ride segment $r_{c}$ of driver $j$ is the whole driving segment of the vehicle and $r_{c} \supseteq r$ for all riding segment $r$. Since $b_{i}=a_{j}$ it has the same last ride segment $r_{c}$ and the update is no-vehicle-loss from lemma \ref{thm:rcnondec}. Then from lemma \ref{thm:rspg}, $c_{i}(b_{i},a_{-i}) \le c_{j}(a)=\Phi(a)$. Then $b_{i}$ updates the minimum cost and accordingly $\Phi(b_{i},a_{-i}) \le \Phi(a)$.
	
	%rcnondec使えない！反例ありえるのでは？ない。r*から短くすることはない. rpはlonely driverいる限り消えない
	%If $j$ is a passenger and $i$ is not a lonely driver, $i$ can copy the strategy and update to $b_{i}=a_{j}$. Since the last riding segment of $j$ is the lowest $r_{*}$, so as the one of $b_{i}$. Since there is a lonely driver and $r_{c}$ does not change in this case, this update is no-vehicle-loss. Then from lemma \ref{thm:rspg}, $c_{i}(b_{i},a_{-i}) \le c_{j}(a)=\Phi(a)$. Then $b_{i}$ updates the minimum cost and accordingly $\Phi(b_{i},a_{-i}) \le \Phi(a)$.

	If $j$ has a riding path $r_{j}$ and $i$ is a driver, $i$ cannot update his strategy according to Lemma \ref{thm:driver}.

	Accordingly, if a player can update his strategy and reduce cost, it also reduces $\Phi(\vect{a})$ or a player cannot update his strategy. Then, $\Phi(\vect{a})$ satisfies Definition \ref{def:pg} and is a potential function. Consequently, from Theorem \ref{thm:pg}, game $G$ has an FIP. This complete the proof of Theorem \ref{thm:rsone}.

\end{proof}

Although Theorem \ref{thm:rsone} has several assumptions and covers only a limited class of ride sharing games, the following theorem indicates a possibility of relaxation of the assumptions.

\begin{theorem}
	The assumptions in Theorem \ref{thm:rsone} are not necessary conditions.
	\label{thm:rshope}
\end{theorem}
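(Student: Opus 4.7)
The plan is to exhibit a counterexample: a specific ride sharing game that violates at least one of the hypotheses H1--H5 of Theorem \ref{thm:rsone} but nevertheless has the FIP. Since Theorem \ref{thm:rsone} only claims a sufficient condition, producing one such game is enough to show non-necessity. The structure of the argument should mirror the proof of Theorem \ref{thm:nonpg}, which invokes a concrete instance from the examples section; here I would construct an instance and place it in Section \ref{sec:exam}.

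First I would target the hypothesis whose violation is the most tractable. H2 ($M=1$) is especially restrictive, so a natural first candidate is a small game with $M=2$ on a compact graph (a handful of nodes, one short trip for each of two or three players), keeping H1, H3, H4, H5 in force. Cost functions $c_{e}$ are chosen strictly decreasing in $s_{m}$ on the relevant range $s_{m}<w$, so that pooling into one vehicle is always strictly preferred, preserving the ``increasing returns / lock-in'' intuition that drives Theorem \ref{thm:rsone}. Concretely, the two vehicles can be placed so that only one set of players can ever make use of the second vehicle, and then the game's effective dynamics collapse to the $M=1$ case without literally satisfying H2.

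Second I would verify FIP directly. Because the action set and the player set are both small and finite, the directed graph of unilateral improving deviations can be enumerated, and it suffices to check that this graph is acyclic. An equivalent and cleaner certification is to exhibit a potential-like function such as $\Phi(\vect{a})=\min_{k}c_{k}(\vect{a})$ (as in the proof of Theorem \ref{thm:rsone}), or a modification summing a per-vehicle contribution, and check ordinal alignment with Definition \ref{def:pg}; Theorem \ref{thm:pg} then supplies the FIP conclusion and makes the example self-contained.

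The main obstacle is keeping the counterexample robust while relaxing more delicate assumptions. Dropping H3 ($N\le w$) is genuinely harder: once occupancy can cross the capacity threshold $w$, cost functions switch from monotone decreasing to monotone increasing, and Lemma \ref{thm:rspg} fails, so copy-dynamics no longer guarantee descent of $\Phi$. I would therefore relax only one hypothesis at a time, and keep the instance small enough that acyclicity of the better-response graph can be certified by direct case analysis. If space permits, I would also sketch a parallel instance violating H5 (path set not disjoint) where acyclicity persists, to reinforce that several of H1--H5 are individually non-necessary rather than only H2.
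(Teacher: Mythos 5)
Your proposal is correct and matches the paper's approach: the paper proves this theorem by pointing to the example of Section \ref{sec:pg2}, which is exactly your first candidate --- a game with $M=2$ (violating H2) that still converges to the pNE of Figure \ref{fig:pg}. Your additional suggestions (certifying FIP by exhaustive acyclicity checking or by the potential $\Phi(\vect{a})=\min_{k}c_{k}(\vect{a})$, and relaxing other hypotheses one at a time) go beyond what the paper actually does, and would arguably make the argument more rigorous, since the paper's example only exhibits a single improvement path rather than verifying acyclicity of the full better-response graph.
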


\begin{proof}
	The example in Section \ref{sec:pg2} shows the case where a game has an FIP even though it does not satisfy all the assumptions in Theorem \ref{thm:rsone}.
\end{proof}

\section{Examples}
\label{sec:exam}

\begin{hdn}
	\color{red}
	\begin{verbatim}
	\end{verbatim}
	\color{black}
	%共通仮定
	%eR:T通りにこのケースではpBNEに収束した。BCEしたらOD協調できた
	%pNEの例
	%このmu。これは定理のHを満たす
\end{hdn}

\begin{jap}
	\color{red}
	\begin{verbatim}
	事例
	・純粋ナッシュ均衡がない場合
	・純粋ナッシュ均衡がある場合(証明済み)
	・純粋ナッシュ均衡がある場合(未証明のケース)
	・シグナリングによるPoA改善事例
	\end{verbatim}
	\color{black}
\end{jap}

In this section, we provide graphical examples of ride sharing games. In Sections \ref{sec:nonpg} to \ref{sec:pg2} we assume the following games.
\begin{itemize}
	\item $N=3, T=5, V=4$.
	\item $\mc{G}$ is a complete graph but each node has a loop edge connected to itself to represent staying of players.
	\item Initial location of players is node 1 and that of vehicles is node 2.
	\item $a_{i}$ must include nodes 3 and 4 for all players.
	\item $\mu$ is the first-fit linear allocation.
\end{itemize} 

All edges have the same cost function, which is
\begin{equation}
 c_{e}=
\left\{
\begin{array}{ll}
\frac{d}{s_{m}+1} & \rif s_{m} \le w \\
d(1-\frac{w^2}{s_{m}(w+1)}) & \rif s_{m} > w
\end{array}
\right.
\label{eq:cost}
\end{equation}
where $d=0$ for loop edges and $d=1$ for others.

\subsection{A case of non-FIP game}
\label{sec:nonpg}

\begin{hdn}
	\color{red}
	\begin{verbatim}
	\end{verbatim}
	\color{black}
	%図。均衡まで
	%sharegame.CarShareG(3,2,4,5,1); # dilemmaFIP 
	%11341_11341_11341 init
	%12431_11341_11341 a pick
	%12431_11431_11341 b foll
	%12431_11431_12341 c pick
	%11341_11431_12341 a foll
	%11341_12431_12341 b pick
	%11341_12431_11431 c foll
\end{hdn}

Here, we assume $M=2$ and $w=1$ and then $G$ does not have an FIP. Figure \ref{fig:nonpg0} shows the initial state of this game. The numbers represent nodes $v$; $a,b$ and $c$ represent the players, and $*$ represents a vehicle. Figures \ref{fig:nonpg1} and \ref{fig:nonpg2} show how strategy updates make a loop and the FIP is broken. In Figure \ref{fig:nonpg1}, there are two drivers ($a$ and $c$) and $b$ is the player with the minimum cost. However, in Figure \ref{fig:nonpg2}, $a$ updates his strategy to quit being a driver and become a passenger to reduce his cost, and $b$ loses a vehicle and his cost increases. In this case, $w=1<N$ so that player $b$ cannot reduce cost by copying another player's strategy according to Lemma \ref{thm:rspg}. Then, $b$ must choose the other vehicle to reduce his cost again as in Figure \ref{fig:nonpg3}. This negative externality makes an infinite loop of this driver switching behavior and the game loses its FIP. 

\begin{figure}[ht]
	\centering
	\includegraphics[clip,width=\fsize]{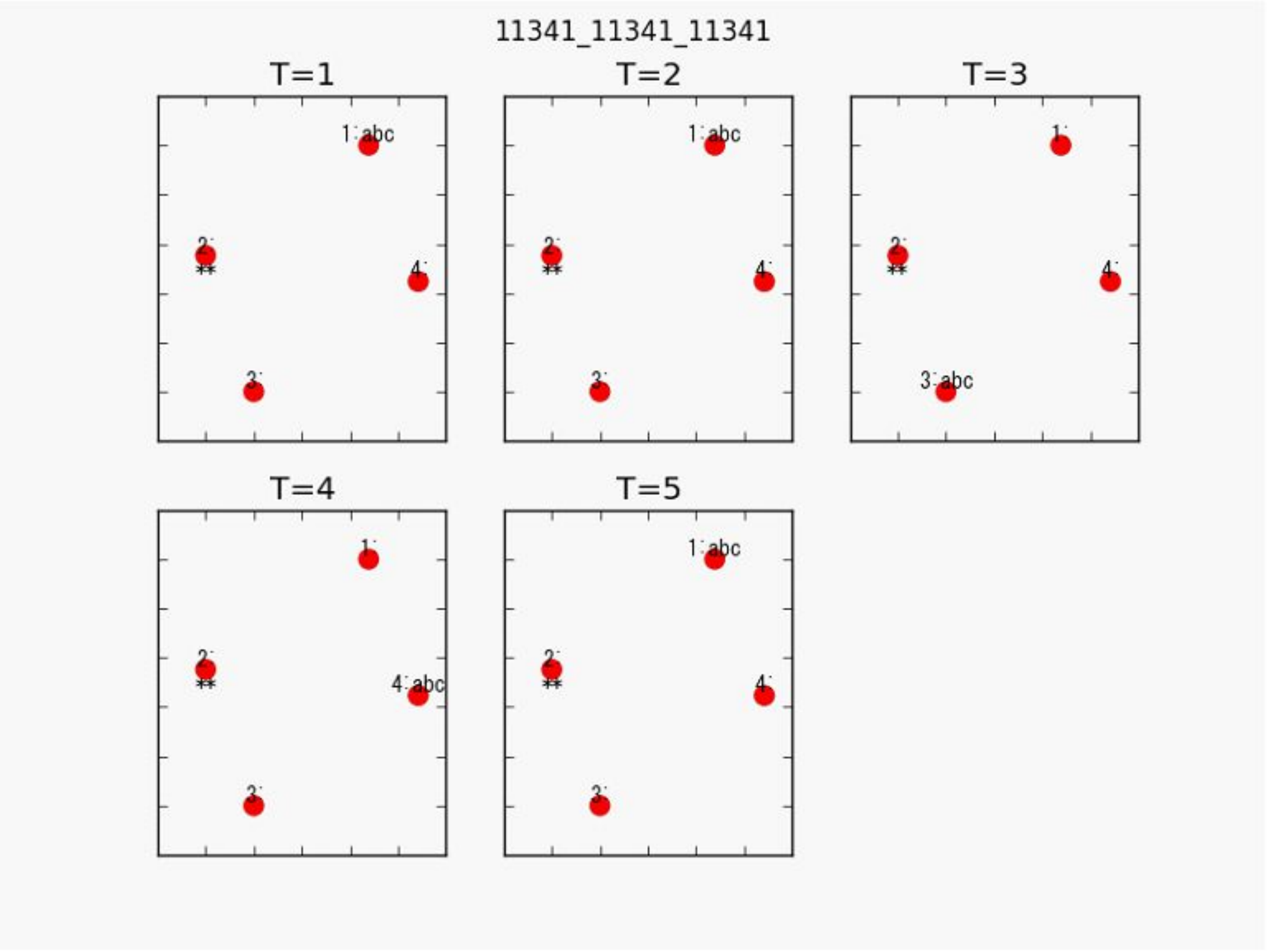}
	\caption{Initial strategy profile of non-FIP game}
	\label{fig:nonpg0}	
\end{figure}

\begin{figure}[ht]
	\centering
	\includegraphics[clip,width=\fsize]{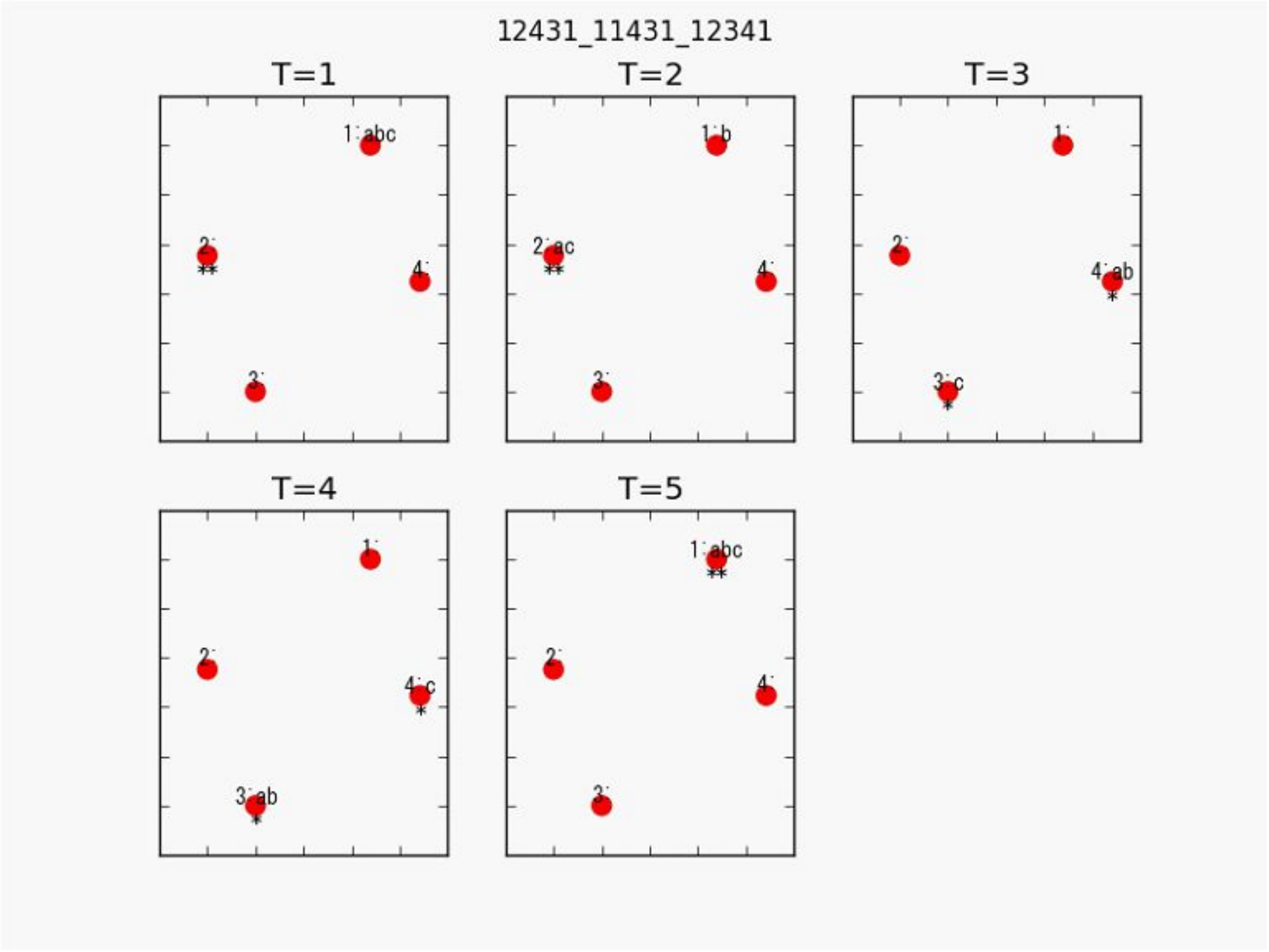}
	\caption{Two drivers}
	%12431_11431_12341 c pick
	\label{fig:nonpg1}
\end{figure}

\begin{figure}[ht]
	\centering
	\includegraphics[clip,width=\fsize]{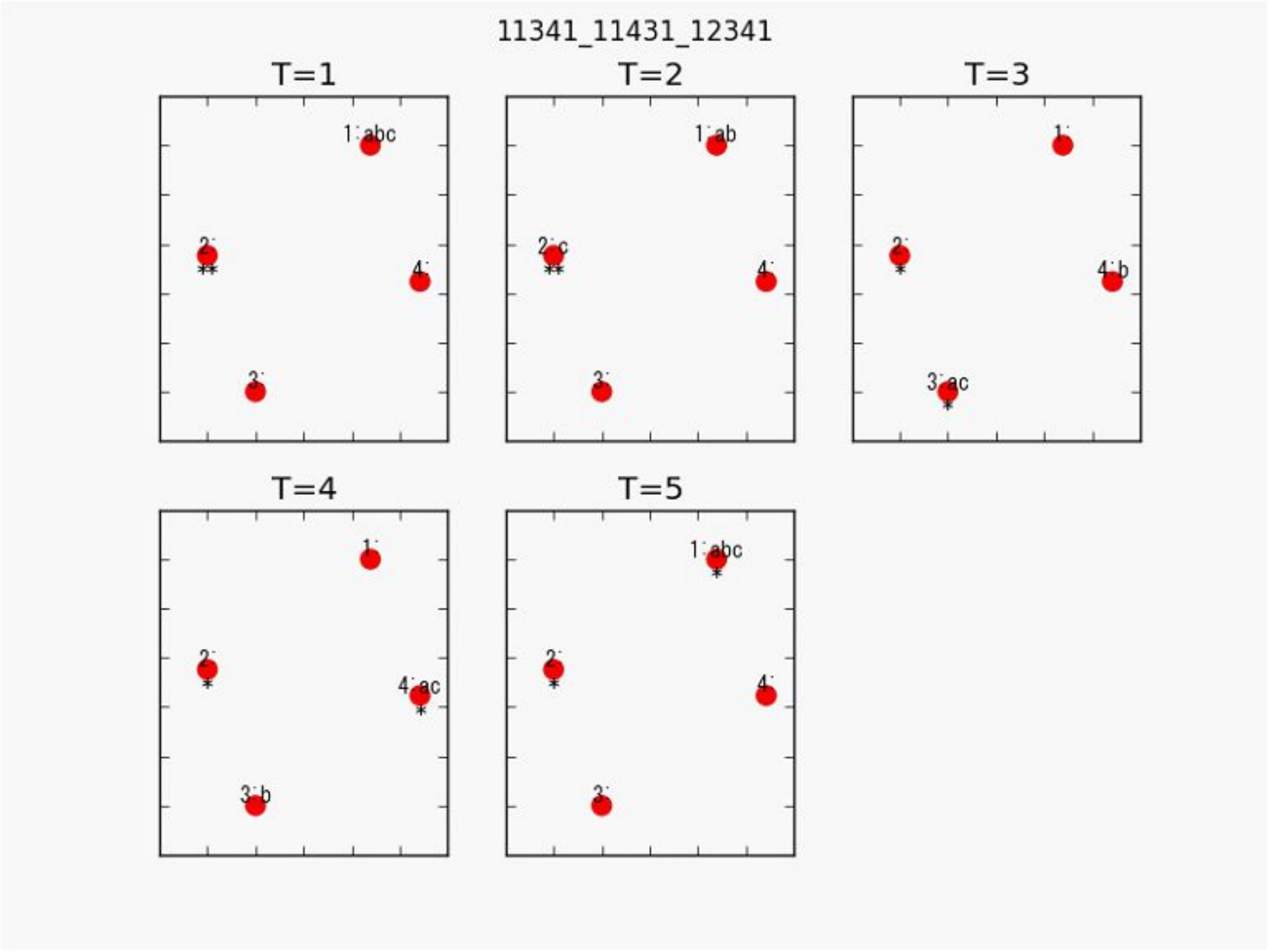}
	\caption{Negative externality}
	\label{fig:nonpg2}	
\end{figure}

\begin{figure}[ht]
	\centering
	\includegraphics[clip,width=\fsize]{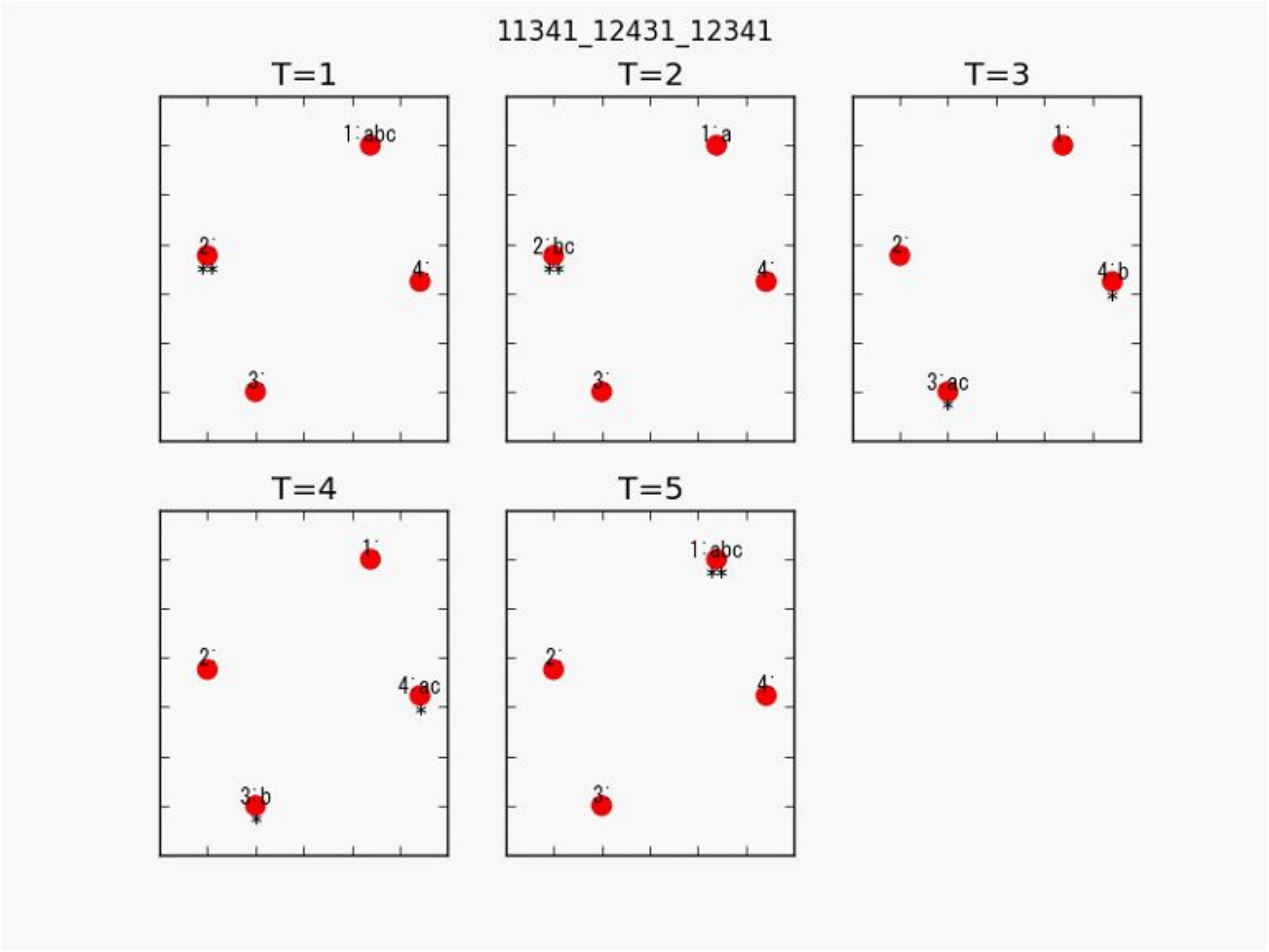}
	\caption{A player's restarting update}
	\label{fig:nonpg3}	
\end{figure}

\subsection{A case of FIP game}
\label{sec:pg}

\begin{hdn}
	\color{red}
	\begin{verbatim}
	\end{verbatim}
	\color{black}
	%CarShareG(3,1,4,5,4); # dilemmaFIP
	%11341_11341_11341
	%12341_11341_11341	
	%CarShareG(3,2,4,5,4); # dilemmaFIP
	%2台でもFIP。よってnot necessary cond
\end{hdn}

Here, we assume $M=1$ and $w=4$ and that $G$ satisfies all the assumptions in Theorem \ref{thm:rsone} and has an FIP. In this case, the game immediately converges into a pNE as in Figure \ref{fig:pg}. The best update of player $a$ is to pickup the vehicle and all other players because the vehicle has enough capacity.

\begin{figure}[ht]
	\centering
	\includegraphics[clip,width=\fsize]{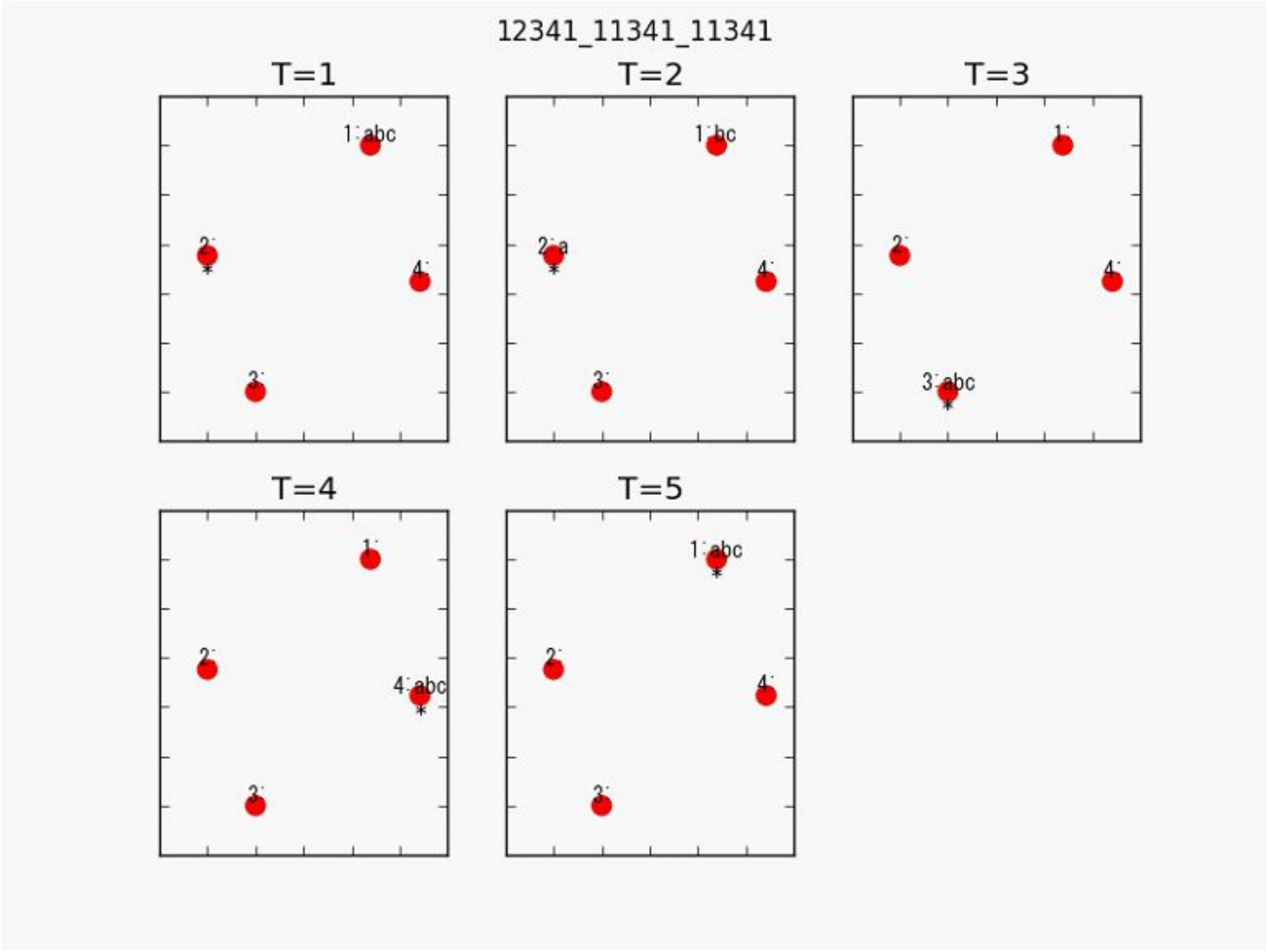}
	\caption{A pNE of the FIP game}
	\label{fig:pg}	
\end{figure}

\subsection{Another case of FIP game}
\label{sec:pg2}

Here, we consider another case where $M=2$ and $w=4$ and the initial profile is the same as that in Figure \ref{fig:nonpg1}. In this case, $a$ updates the same strategy as that in Figure \ref{fig:nonpg2} and increases the cost of player $b$. However, in this case, $b$ does not have to pick up the other vehicle but can be a passenger as in Figure \ref{fig:pg1}, and this is the same pNE as that in Figure \ref{fig:pg}. While this game does not satisfy H2 in Theorem \ref{thm:rsone}, it has an FIP. This means that the assumptions in Theorem \ref{thm:rsone} are not necessary conditions.

 \begin{figure}[ht]
 	\centering
 	\includegraphics[clip,width=\fsize]{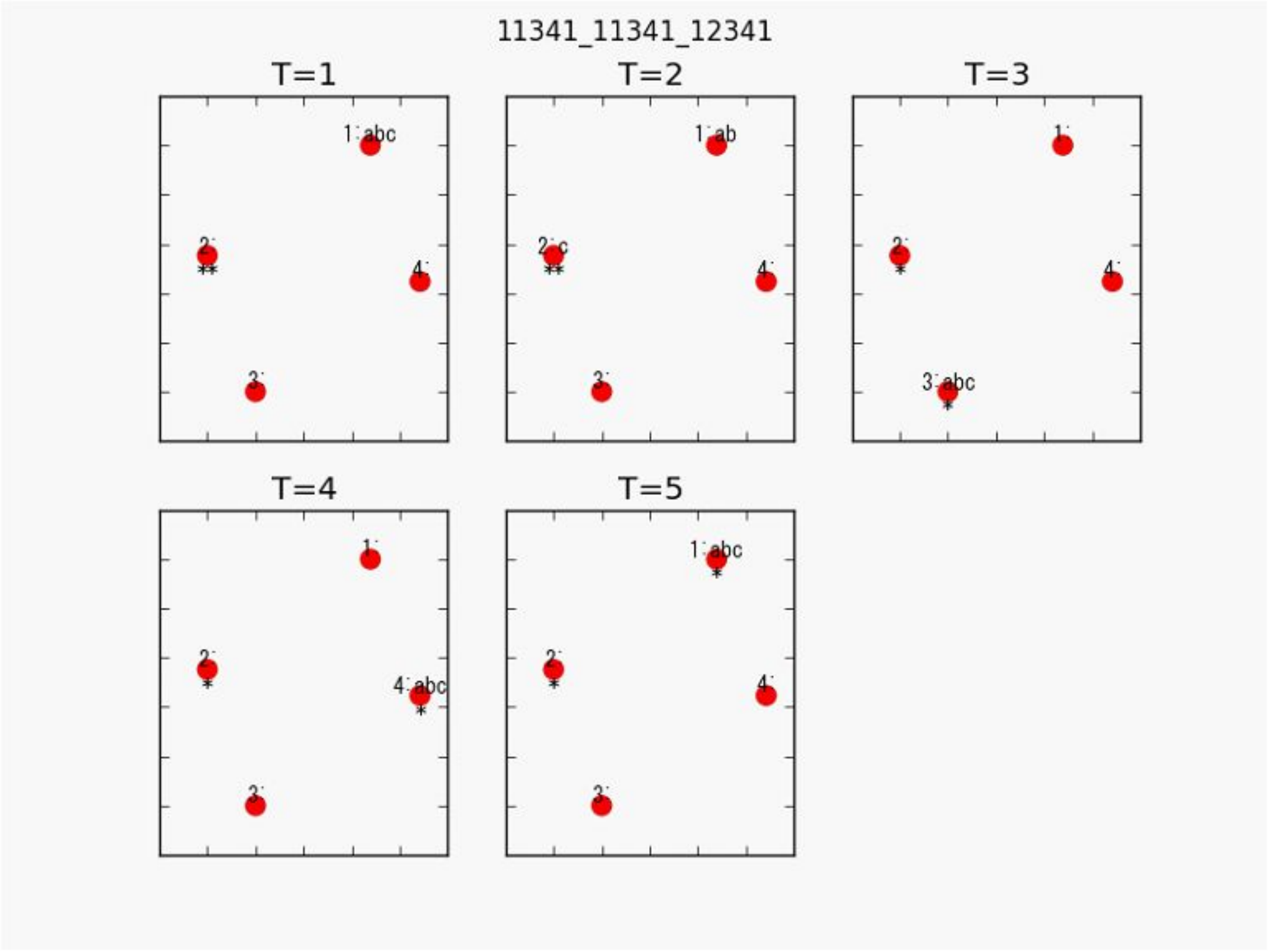}
 	\caption{Same pNE of the FIP game}
 	\label{fig:pg1}	
 \end{figure}
 
 \clearpage

\subsection{An example of signaling in ride sharing games}
\label{sec:bceex}

\begin{hdn}
	\color{red}
	\begin{verbatim}
PoAとかpgとかと関係ないですがー、今回は例による分析に留める、理解のためにー、と断る？
poa(bne) > poa(pne)=1
よって、シグナリングで改善を目指す
	\end{verbatim}
	\color{black}
\end{hdn}

Here, we show how signaling can improve the efficiency of sharing by giving players an incentive to coordinate with each other in a Bayesian ride sharing game. Game $G$ is defined as follows.

\begin{itemize}
	\item $N=2, V=3, T=4, M \le 1$.
	\item $\mc{G}$ and initial locations are shown in Figure \ref{fig:brsg}. All nodes have loop edges to themselves.
	\item $a_{i}$ must include node 3 for all players.
	\item $\mu$ is the first-fit linear allocation.
	\item There is an uncertainty $x \in \mc{X}=\{0,1\}$ regarding the existence of the vehicle. $x=0$ means $M=0$ and $x=1$ means $M=1$.
	\item All players have a common prior $p_{i}(x=0)=0.5, \forall i$. 
\end{itemize}

For each $x$, this game satisfies the assumptions in Theorem \ref{thm:rsone}. There are only two distinct options for each player that $\mc{A}=\{C,D\}$. $C=(1,2,3,1)$ is a trip that visits nodes in this order. On the other hand, $D=(1,1,3,1)$. All edges except for loop edges have the same cost function. If a player does not use the vehicle, the cost is 8. If a player drives alone, the cost is 6. If two players share the vehicle, the cost is 1. The cost of loop edges is zero. The cost matrices of this game are shown in Tables \ref{tbl:cost0} and \ref{tbl:cost1}.

 \begin{figure}[h]
 	\centering
 	\includegraphics[clip,width=6cm]{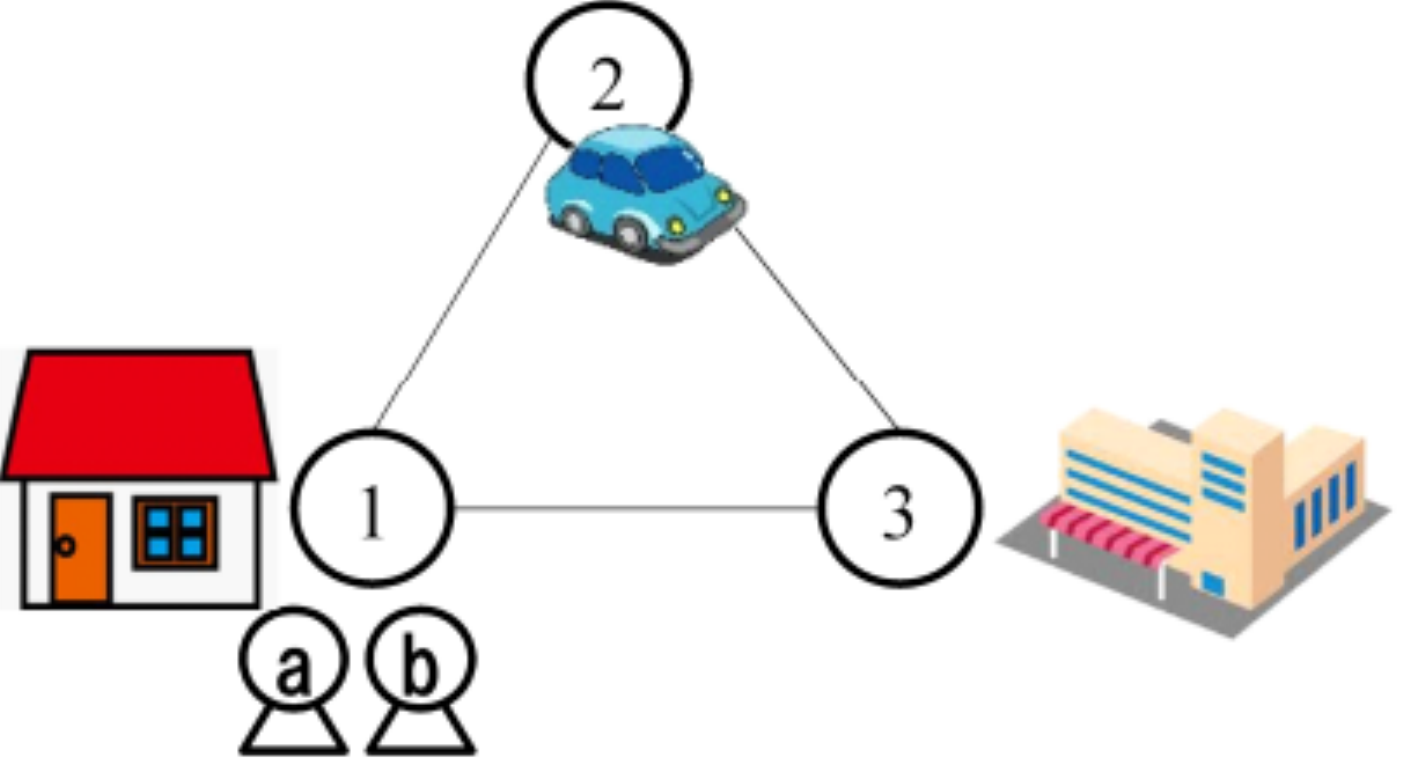}
 	\caption{A Bayesian ride sharing game}
 	\label{fig:brsg}	
 \end{figure}

The expected cost matrix is shown in Table \ref{tbl:ecost}. This matrix has the structure of a prisoner's dilemma and pBNE is $a=(D,D)$, which means no players share the vehicle.

Now we consider a system to coordinate players to share the unused vehicle by the BCE, as described in Section \ref{sec:signal}. A system cost can be denoted as $c_{s}(a|x)=\sum_{i \in \mc{N}}c_{i}(a|x)$. Then, the problem of the system is denoted as Eq.\ref{eq:problem}, which is the search for an optimal recommendation policy $\sigma(\hat{\vect{a}}|x)$ as in Table \ref{tbl:sigma}. The problem becomes a linear programming and Table \ref{tbl:sigmaopt} presents a solution. This incentive compatible recommendation induces the coordination of players as a BCE, where $\mathbb{E}_{x}[c_{s}(\hat{\vect{a}},x)]=27.9$, which is better than the one of pBNE $\mathbb{E}_{x}[c_{s}(\vect{a}|x)]=32$. Since $\mathbb{E}_{x}[c_{s}(\vect{a}|x)]=26$ in social optimum, the PoA is improved from 1.23 of pBNE to 1.07 of BCE.

\begin{table}[!]
	\centering
	\caption{$c_{i}(a_{1},a_{2}|x=0)$}
	\begin{tabular}{|c|c|c|} \hline
		& C & D \\ \hline \hline
		C & 20,20 & 20,16 \\ \hline
		D & 16,20 & 16,16 \\ \hline
	\end{tabular}
	\label{tbl:cost0}
	
	\caption{$c_{i}(a_{1},a_{2}|x=1)$}
	\begin{tabular}{|c|c|c|} \hline
		& C & D \\ \hline \hline
		C & 10,10 & 15,9 \\ \hline
		D & 9,15 & 16,16 \\ \hline
	\end{tabular}
	\label{tbl:cost1}
\end{table}

\begin{table}[!]
	\centering
	\caption{$\mathbb{E}_{x}[c_{i}(a_{1},a_{2}|x)]$}
	\begin{tabular}{|c|c|c|} \hline
		& C & D \\ \hline \hline
		C & 15,15 & 17.5,12.5 \\ \hline
		D & 12.5,17.5 & 16,16 \\ \hline
	\end{tabular}
	\label{tbl:ecost}
\end{table}

%\clearpage

\begin{table}[!]
	\centering
	\caption{$\sigma(\hat{a}_{1},\hat{a}_{2}|x)$}
	\begin{tabular}{|c|c|c||c|c|c|} \hline
		\multicolumn{3}{|c||}{$\sigma(\hat{a}_{1},\hat{a}_{2}|0)$}  & \multicolumn{3}{|c|}{$\sigma(\hat{a}_{1},\hat{a}_{2}|1)$} \\ \hline \hline
		& C & D &  & C & D \\ \hline
		C & $\alpha_{0}$ & $\beta_{0}$ & C & $\alpha_{1}$ & $\beta_{1}$ \\ \hline
		D & $\beta_{0}$ & $1-\alpha_{0}-2\beta_{0}$ & D & $\beta_{1}$ & $1-\alpha_{1}-2\beta_{1}$ \\ \hline
	\end{tabular}
	\label{tbl:sigma}
\end{table}

\begin{table}[!]
	\centering
	\caption{Optimal $\sigma(\hat{a}_{1},\hat{a}_{2}|x)$}
	\begin{tabular}{|c|c|c||c|c|c|} \hline
		\multicolumn{3}{|c||}{$\sigma(\hat{a}_{1},\hat{a}_{2}|0)$}  & \multicolumn{3}{|c|}{$\sigma(\hat{a}_{1},\hat{a}_{2}|1)$} \\ \hline \hline
		& C & D &  & C & D \\ \hline
		C & 0 & 0 & C & 0.06 & 0.47 \\ \hline
		D & 0 & 1 & D & 0.47 & 0 \\ \hline
	\end{tabular}
	\label{tbl:sigmaopt}
\end{table}

\begin{comment}
%\section{Summary}
\label{sec:summary}

\begin{hdn}
	\color{red}
	\begin{verbatim}
	abst,contribと同じ？不要？
	\end{verbatim}
	\color{black}
\end{hdn}

This study proposes ride sharing games as a formulation of positive and negative externalities caused by changes in the supply of shared vehicles. The study's objective is to understand the PoA and its improvement via a coordination technique in ride sharing games. A critical question is whether ride sharing games have a pNE since the PoA assumes it. Our result shows a sufficient condition for a ride sharing game to have a FIP and a pNE similar to potential games. This is the first step to analyze PoA and its improvement by coordination in ride sharing games. We also show an example of coordinating players in ride sharing games using signaling and evaluate the improvement in the PoA.
\end{comment}

%\bibliographystyle{plain}
%\bibliography{references}

\begin{thebibliography}{99}
	\bibitem{rosenthal}
	R. W. Rosenthal, (1973). "A class of games possessing pure-strategy Nash equilibria," International Journal of Game Theory 2(1), 65.

	\bibitem{cgapp}
	E. Altman, L. Wynter, (2004). "Equilibrium, games, and pricing in transportation and telecommunication networks," Networks and Spatial Economics 4(1), 7-21.
	
	\bibitem{sandolm}
	W. H. Sandholm, (2010). "Population games and evolutionary dynamics," MIT press.

	\bibitem{poa}
	T. Roughgarden, E. Tardos, (2002). "How bad is selfish routing?," Journal of the ACM
	(JACM) 49(2), 236.

	\bibitem{agatz}
	N. Agatz, (2012). "Optimization for dynamic ride-sharing: A review," European Journal of Operational Research 223(2), 295.
	
	%\bibitem{pricing}
	%Yang, H., Huang, H. J. (1998). "Principle of marginal-cost pricing: how does it work in a general road network?", Transportation Research Part A: Policy and Practice, 32(1), 45-54.
	
	\bibitem{cgmech}
	G. Christodoulou, E. Koutsoupias, A. Nanavati, (2004). "Coordination mechanisms," Automata, Languages and Programming (pp. 345-357). Springer Berlin Heidelberg.

	\bibitem{kame}
	M. Gentzkow, E. Kamenica, (2011). "Bayesian persuasion," American Economic Review
	101(6), 2590.
	
	\bibitem{bce}
	D. Bergemann, S. Morris, (2015). "Bayes correlated equilibrium and the comparison of information structures in games," Theoretical Economics.

	\bibitem{kremer}
	I. Kremer, Y. Mansour, M. Perry, (2014). "Implementing the wisdom of the crowd," Journal of Political Economy 122(5), 988.

	\bibitem{rev}
	R. B. Myerson, (1979). "Incentive compatibility and the bargaining problem," Econometrica: journal of the Econometric Society 47(1), 61.

	\bibitem{rogers}
	R. M. Rogers, A. Roth, (2014). "Asymptotically truthful equilibrium selection in large congestion games," Proceedings of the fifteenth ACM conference on Economics and computation. ACM.
	
	\bibitem{vasserman}
	S. Vasserman, M. Feldman, A. Hassidim, (2015). "Implementing the wisdom of waze," Proceedings of the Twenty-Fourth International Joint Conference on Artificial Intelligence (IJCAI 2015).
	
	\bibitem{opg}
	D. Monderer, L. S. Shapley, (1996). "Potential games," Games and economic behavior 14(1), 124-143.
	
	\bibitem{arthur}
	W. B. Arthur, (1994). "Increasing returns and path dependence in the economy", University of Michigan Press.
	
	\bibitem{binpack}
	A. C. C. Yao, (1980). "New algorithms for bin packing," Journal of the ACM (JACM) 27(2), 207-227.


\end{thebibliography}

%\newpage
\clearpage

%
\begin{comment}
\newpage
\begin{appendices}
\section{Proof of Lemma \ref{thm:driver}}
\label{sec:proofdriver}

Prior to the proof of Lemma \ref{thm:driver}, we introduce the following lemma which guarantees that all players prefer to be a passenger to be a driver.

\begin{lemma}
	Let $a_{d}$ and $a_{p}$ be strategy profiles of player $i$ when he is a driver and a passenger of a vehicle, respectively. Then
	\begin{equation}
	c_{i}(a_{d},a_{-i}) \ge c_{i}(a_{p},a_{-i}), \forall i, a_{-i}.
	\label{eq:prefp}
	\end{equation}
	\label{thm:prefp}
\end{lemma}

\begin{proof}
	Let $r$ is a necessary allocated path and $r_[c] \supset r$ is a riding path of $i$ when he is a driver. If player $i$ is only player who rides on a vehicle and $c_{i}(a_{d},a_{-i}) \le c_{i}(a_{p},a_{-i})$, it always holds because the cost of $r_{c}-r$ is monotone decreasing for the number of players on the path. Then $r_{c}$ is a necessary path from Definition \ref{def:necp} and $r$ is not, which is a contradiction.
\end{proof}

From H5, all players update their strategy to include necessary allocated path 

\end{appendices}
\end{comment}

\end{document}